\newcommand{\R}{ {\mathbb R} }
\newcommand{\Z}{ {\mathbb Z} }
\newcommand{\vb}{ \mathbf{b} }
\newcommand{\vc}{ \mathbf{c} }
\newcommand{\vf}{ \mathbf{f} }
\newcommand{\vx}{ \mathbf{x} }
\newcommand{\vzero}{ \mathbf{0} }
\newcommand{\St}{\operatorname{St}}
\newcommand{\Lk}{\operatorname{Lk}}
\newcommand{\Dom}{\operatorname{Dom}}
\newcommand{\crt}{\mathscr{C}}
\newcommand{\crtM}{\mathscr{C}_M}
\newcommand{\crtL}{\mathscr{C}_L}
\newcommand{\subG}{\mathscr{S}}
\DeclareMathOperator{\boundary}{\partial}
\newcommand{\homo}{{\sf H}}
\newcommand{\chains}{{\sf C}}
\newcommand{\cycles}{{\sf Z}}
\newcommand{\bdys}{{\sf B}}
\DeclareMathOperator{\image}{{\mathrm im\,}}
\newcommand{\rltvbndry}[3]{\operatorname{\boundary}_{#1}^{\, (#2,#3)}}
\definecolor{darkgrn}{rgb}{0, 0.8, 0}
\newtheorem{theorem}{Theorem}[section]
\newtheorem{corollary}[theorem]{Corollary}
\newtheorem{proposition}[theorem]{Proposition}
\theoremstyle{definition}
\newtheorem{definition}[theorem]{Definition}
\theoremstyle{remark}
\begin{document}

\title{Edge Contractions and Simplicial Homology}

\author{
Tamal K. Dey\thanks{
Department of Computer Science and Engineering,
The Ohio State University, Columbus, OH 43210, USA.
Email: {\tt tamaldey@cse.ohio-state.edu}}
\quad\quad
Anil N. Hirani\thanks{
Department of Computer Science, 
University of Illinois at Urbana-Champaign, Urbana, IL 61801, USA.
Email: {\tt hirani@illinois.edu}}
\quad\quad Bala Krishnamoorthy\thanks{
Department of Mathematics,
Washington State University, Pullman, WA, USA.
Email: {\tt bkrishna@math.wsu.edu}}
\quad\quad Gavin Smith\thanks{
Department of Mathematics,
Washington State University, Pullman, WA, USA.
Email: {\tt gsmith@math.wsu.edu}} 
}


\maketitle
\thispagestyle{empty}
\begin{abstract}
  We study the effect of edge contractions on simplicial
  homology because these contractions have turned to be useful in various
  applications involving topology. It was observed previously that 
  contracting edges that
  satisfy the so called {\em link condition} preserves homeomorphism
  in low dimensional complexes, and homotopy in general.  
  But, checking the link condition involves computation in all dimensions,
  and hence can be costly, especially in high dimensional
  complexes. We define a weaker and more local condition called the
  $p$-{\em link} condition for each dimension $p$, and study its
  effect on edge contractions. We prove the following: (i) For homology
  groups, edges satisfying the
  $p$- and $(p-1)$-link conditions can be contracted without
  disturbing the $p$-dimensional homology group. (ii) For
  relative homology groups, the $(p-1)$-, and the $(p-2)$-link conditions 
  suffice to guarantee that the contraction does not introduce any new class in 
  any of the resulting relative
  homology groups, though some of the existing classes can be
  destroyed. Unfortunately, the surjection in relative homolgy groups
  does not guarantee that no new relative
  torsion is created. (iii) For torsions, edges satisfying the 
  $p$-link condition alone
  can be contracted without creating any new relative torsion and
  the $p$-link condition cannot be avoided.  The
  results on relative homology and relative torsion are motivated by
  recent results on computing optimal homologous chains, which state
  that such problems can be solved by linear programming if the
  complex has no relative torsion.  Edge contractions that do not
  introduce new relative torsions, can safely be availed in these contexts.
\end{abstract}

\section{Introduction}
The study of edge contractions in the context of
graph theory~\cite{WB04}, especially in graph minor theory~\cite{MT01}
have resulted into many beautiful results.
The extension of edge contractions to simplicial complexes 
where the structure not
only has vertices and edges, but also higher dimensional simplices
has also turned out to be beneficial for shape representation
in graphics and visualization~\cite{HDDMS1993,GaHe1997} and 
recently in topological
data analysis~\cite{ALS13}. In this paper, we present several results
about edge contractions in simplicial complexes that can benefit 
extraction of topological attributes from a shape or data representation.

Topological attributes such as the rank of the homology groups, also
known as betti numbers, and cycles representing the homology classes
carry important information about a shape. Naturally, efforts have
ensued to compute them efficiently in various applications. Examples
include computing topological features for low dimensional complexes
in graphics, visualizations, and sensor
networks~\cite{Bias2012,SiGh2007,TaJa2009,WHDS2004}, and for higher
dimensional ones in data analysis~\cite{Gh2008barcodes}.  As the input
sizes in these applications grow with the advances in data generation
technology, methods to speed up the homology computations become more
demanding. For example, the Vietoris-Rips complex has been recognized
as a versatile data structure for inferring homological attributes
from point cloud data~\cite{Gh2008barcodes}. However, because of its inclusive
nature, this complex tends to contain a large number of simplices and
the computation becomes prohibitive when the input point set is large.
One way to tackle this issue is to use {\em edge
  contractions}~\cite{DEGN1999,GaHe1997} that contract edges and
collapse other simplices as a result. Quite naturally, edge
contractions have already been proposed to tame the size of
large Rips complexes~\cite{ALS13}. In a recent work on 
computing optimal homologous cycles (OHCP), we have shown how
relative homology and torsions play a role in guaranteeing a polynomial
time optimization~\cite{DeHiKr2011}. 
To avail the benefit of edge contractions  in this
context we need to understand its effect on relative homology and torsion. 
Motivated by these applications, we make a systematic
study of the effects of edge contractions on simplicial homology.

The effects of edge contractions on topology was initially studied by
Walkup~\cite{Walkup70} for $3$-manifolds
and then by Dey et al.~\cite{DEGN1999} for more general domains. 
They showed that an edge $e$ in a
$2$-complex or a $3$-manifold can be contracted while preserving a
homeomorphism between the complexes before and after the edge
contraction if $e$ satisfies a local condition called the {\em link
  condition}.  The condition, due to its locality, is easily checkable
at least for low dimensional complexes. Attali, Lieutier, and Salinas
\cite{ALS2011Blockers} showed that the result can be extended to the
entire class of finite simplicial complexes if only homotopy instead
of homeomorphism needs to be preserved.  Since homotopy equivalent
spaces have isomorphic homology groups, link conditions also suffice
to preserve homology groups. However, verification of the link
condition requires checking it at every dimension, which becomes
costly for higher dimensional simplicial complexes.

In this work, we extend the above results in two directions: (i) we
study edge contractions not only for homology groups, but also for
relative homology groups and torsion groups, and (ii) we define a
weaker and even more local condition called the {\em $p$-link
  condition} for each dimension $p$ and analyze its effect on
homology, relative homology, and torsion. 
Specifically, we prove that an edge
contraction cannot destroy any homology class in dimension $p$ if the
edge satisfies the $p$-link condition alone. Furthermore, no new class
is introduced if the edge satisfies the $(p-1)$-link condition.  For
relative homology, we show that no new homology class in dimension $p$
relative to a particular subcomplex is created if the edge satisfies the
$(p-1)$- and the $(p-2)$-link conditions in that {\em subcomplex}. 
This result also implies that no new relative homology classes 
are generated in the contracted complex as long as the contracting
edge satisfies the $(p-1)$- and $(p-2)$-link conditions in the
{\em original complex}. Unfortunately, this surjectivity in relative homology
classes does not mean surjectivity in {\em relative torsions}, that is,
the torsion subgroups of relative homology groups. Of course,
if the edge additionally satisfies the $p$-link condition in the 
subcomplex, we have isomorphisms in relative homology
groups which necessarily mean that no new relative torsion
is created. 
We strengthen the condition to require the $p$-link condition
alone if only relative torsion is of interest, which is the case for
OHCP. An example shows that one cannot avoid the $p$-link condition if one is 
to guarantee that no new relative torsion is introduced. 

Our result on homology preservation under $p$- and $(p-1)$-link 
conditions can be
used to compute the betti numbers of large complexes after edge
contractions. They can also be availed to compute actual
representative cycles in a small contracted complex, which can then be
pulled back to the original complex through a systematic reversal of
the contractions. Computations are saved by checking the link
conditions for only a few dimensions instead of all dimensions.  Similarly,
our result for relative homology can be used to compute the betti 
numbers of the quotient complexes formed by the original complex relative
to a subcomplex.
Our result on torsion can be availed for computing
the shortest cycle in a given homology class. This problem, termed the
OHCP, is known to be NP-hard in general~\cite{ChFr2010}.  It has been
recently shown that the OHCP is solvable in polynomial time when the
homology is defined over $\Z$ and the simplicial complex does not have
relative torsion~\cite{DeHiKr2011}. Similar results hold for related
problems such as the optimal bounding chain problem (OBCP)
\cite{DuHi2011} and the multiscale simplicial flat norm (MSFN) problem
\cite{IKV2011}. Therefore, edge contractions that preserve the absence
of, or even better, eliminate relative torsion should be preferred.
Figure~\ref{fig-contr_ex} illustrates a case where an edge contraction
does eliminate relative torsion and thus allows computing an optimal
cycle by linear programming as shown in~\cite{DeHiKr2011}.  We use
methods from algebraic topology to prove most of our results.
Interestingly enough, the result on relative torsion requiring the
$p$-link condition alone is proved only with graph theoretic
techniques.

\section{Background} \label{sec-backgr}

We recall some basic concepts and definitions from algebraic topology
relevant to our presentation. Refer to standard books, e.g., ones by
Munkres~\cite{Munkres1984}
for details.

\begin{definition}
  Given a vertex set $V$, a \emph{simplicial complex} $K=K(V)$ is a
  collection of subsets $\{\sigma \subseteq V\}$ where
  $\sigma'\subseteq \sigma$ is in $K$ if $\sigma\in K$.  A subset
  $\sigma\in K$ of cardinality $p+1$ is called a \emph{$p$-simplex}.
  If $\sigma'\subseteq \sigma$ ($\sigma'\subset \sigma$), we call
  $\sigma'$ a \emph{face} (\emph{proper face}) of $\sigma$, and
  $\sigma$ a \emph{coface} (\emph{proper coface}) of $\sigma'$.
  A map $h: K\rightarrow K'$ between two complexes is called
  \emph{simplicial} if for every simplex $\{v_1,v_2,\ldots, v_k\}$ in $K$
  $\{h(v_1),h(v_2),\ldots,h(v_k)\}$ is a simplex in $K'$.
\end{definition}

An oriented simplex $\sigma=\{v_0,v_1,\cdots,v_p\}$, also written as
$v_0v_1\cdots v_p$, is an ordered set of vertices.  The simplices
$\sigma_i$ with coefficients $\alpha_i$ in $\mathbb{Z}$ can be added
formally creating a chain $c = \Sigma_i \alpha_i \sigma_i$.  These
chains form the chain group $\chains_p$.  The boundary $\partial_p
\sigma$ of a $p$-simplex $\sigma$, $p\geq 0$, is the $(p-1)$-chain
that adds all the $(p-1)$-faces of $\sigma$ with orientation taken
into consideration.  This defines a boundary homomorphism $\partial_p:
\chains_p\rightarrow \chains_{p-1}$.  The kernel of $\partial_p$ forms
the $p$-cycle group $\cycles_p(K)$ and its image forms the
$(p-1)$-boundary group $\bdys_{p-1}(K)$.  The homology group
$\homo_p(K)$ is the quotient group $\cycles_p(K)/\bdys_p(K)$.
Intuitively, a $p$-cycle is a collection of oriented $p$-simplices
whose boundary is zero. It is a non-trivial cycle in $\homo_p$, if it
is not a boundary of a $(p+1)$-chain.

For a finite simplicial complex $K$, the groups of chains
$\chains_p(K)$, cycles $\cycles_p(K)$, and $\homo_p(K)$ are all
finitely generated abelian groups. By the fundamental theorem of
finitely generated abelian groups \cite[page 24]{Munkres1984} any such
group $G$ can be written as a direct sum of two groups $G=F \oplus T$
where $F\cong (\Z \oplus\cdots\oplus \Z)$ and
$T\cong(\Z/t_1\oplus\cdots\oplus \Z/t_k)$ with $t_i>1$ and $t_i$
dividing $t_{i+1}$. The subgroup $T$ is called the \emph{torsion} of
$G$. If $T=0$, we say $G$ is \emph{torsion-free}.

A simplicial map $h: K\rightarrow K'$ between two simplicial complexes
induces a homomorphism 
between their homology groups which we write as
$h_*: \homo_p(K)\rightarrow \homo_p(K')$.
The edge contractions that we deal with define such
homomorphisms whose properties in relation to various
$p$-link conditions are the focus of our work.  

\subsection{Link conditions}
\label{sec-edgcontr}
We first define edge contractions formally.
\begin{definition}
  Let $ab=\{a,b\}$ be an edge in a simplicial complex $K$. An edge
  contraction of $K$ is a surjective simplicial map $\gamma_{ab}: K
  \rightarrow K'$ induced by the vertex map $h: V(K) \rightarrow
  V(K')$ where $h$ is identity everywhere except at $b$ for which
  $h(b)=a$.
\end{definition}
 
Authors of~\cite{DEGN1999} investigated when such an edge contraction
results in any change in the topology of the simplicial complex. They
provided a sufficient condition termed the link condition, which
guarantees that topology is preserved for certain simplicial
complexes. This condition has been studied further by Attali, Lieutier, and
Salinas \cite{ALS2011Blockers} recently.

\begin{definition}
  The \emph{star} of a set $X\subseteq K$, denoted $\St X$, is the set
  of cofaces of all $\sigma\in X$. For a subset $S$ of $K$, the
  \emph{closure} of $S$, denoted $\overline{S}$, is the set of
  simplices in $S$ and all of their faces.  Then the \emph{link} of
  $X$, denoted $\Lk X$, is the set of simplices in $\overline{\St X}$
  that do not belong to $\St \overline{X}$.  In the left complex of
  Figure~\ref{fig:plink}, the star of the edge $ab$ consists of $ab,
  abd, abe, abde$. Its link is $d, e, de$.
\end{definition}

\noindent The following definition of the link condition is taken from
the work in~\cite{DEGN1999}, and we introduce a weaker condition
called the $p$-link condition.
\begin{definition}
  \label{def-linkcond}
  An edge $ab\in K$ satisfies the \emph{link condition} in $K$ if and only if
  $\Lk a \cap \Lk b = \Lk ab$.  It satisfies the $p$-link condition in $K$,
  if and only if either (i) $p\leq 0$, or (ii) $p>0$ and every
  $(p-1)$-simplex $\xi$ $\in \Lk a \cap \Lk b$ is also in $\Lk ab$.
\end{definition}

\noindent The $p$-link conditions are weaker than the link condition
in the following sense.

\begin{proposition}
  \label{lem-plink_link_condition}
  For any edge $ab$ in a simplicial complex $K$, $ab$ satisfies the
  link condition if and only if it satisfies the $p$-link condition
  for all $p \leq \dim(K)$.
\end{proposition}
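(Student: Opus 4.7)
The plan is to prove the biconditional by expanding definitions and splitting $\Lk a \cap \Lk b$ into its constituent simplices dimension by dimension. First I would record the one universal containment that holds with no hypothesis: $\Lk ab \subseteq \Lk a \cap \Lk b$. Indeed, any $\tau \in \Lk ab$ satisfies $\tau \cup \{a,b\} \in K$ with $a,b \notin \tau$, and downward closure of $K$ then gives $\tau \cup \{a\}, \tau \cup \{b\} \in K$, i.e., $\tau$ lies in both $\Lk a$ and $\Lk b$. So the content of the proposition is entirely about the reverse inclusion $\Lk a \cap \Lk b \subseteq \Lk ab$.

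For the forward direction, if $\Lk a \cap \Lk b = \Lk ab$ then in particular every $(p-1)$-simplex of $\Lk a \cap \Lk b$ lies in $\Lk ab$, so the $p$-link condition holds for every $p>0$; and it holds vacuously for $p\le 0$. This is immediate from the definitions and requires no work.

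For the backward direction, assume the $p$-link condition holds for every $p \le \dim K$, and take an arbitrary $\tau \in \Lk a \cap \Lk b$. I would set $p = \dim \tau + 1$, so that $\tau$ is a $(p-1)$-simplex. The key dimensional observation is that $\tau \cup \{a\}$ is a simplex of $K$ of dimension $p$, so automatically $p \le \dim K$, which is exactly what is needed to invoke the hypothesis; the $p$-link condition then yields $\tau \in \Lk ab$. (If one admits the empty simplex, the case $\tau = \emptyset$ must be handled separately, but it is trivial since $ab \in K$ by assumption.) Combined with the universal containment above, this gives $\Lk a \cap \Lk b = \Lk ab$.

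There is no real obstacle here; the only subtlety is the bookkeeping between the dimension of a simplex in $\Lk a \cap \Lk b$ and the index $p$ used in Definition~\ref{def-linkcond}, and making sure the bound $p \le \dim K$ in the proposition is tight enough to cover every simplex that could possibly appear in $\Lk a \cap \Lk b$. Both drop out of the observation that $\tau \cup \{a\}$ is a simplex of $K$ whose dimension is exactly the index $p$ of the relevant link condition.
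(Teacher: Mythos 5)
Your proof is correct and follows essentially the same route as the paper's: both arguments hinge on the same dimensional bookkeeping, namely that a simplex $\xi \in \Lk a \cap \Lk b$ of dimension $p-1$ gives rise to the $p$-simplex $\xi \cup \{a\}$ in $K$, which certifies $p \le \dim(K)$ and ties the link condition to the $p$-link conditions. The only difference is that you argue directly while the paper phrases both implications by contrapositive, which is a purely cosmetic distinction.
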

\begin{proof} \label{pf-lem-plink_link_condition}
We prove both directions by contrapositive.  Assume $ab$ does not
satisfy the link condition.  By the definition of star, $\St ab = \St
a \cap \St b$.  Therefore, $\overline{\St ab} \subset \overline{\St a}
\cap \overline{\St b}$, and $\Lk ab \subset \Lk a \cap \Lk b$.
Therefore, there must be a simplex $\xi \in \Lk a \cap \Lk b$ where
$\xi \notin \Lk ab$. Let $p = \dim(\xi) + 1$. Then, $ab$ does not
satisfy the $p$-link condition.  By the definition of link, there must
be a $p$-simplex $\tau_1 \in K$ where $\tau_1 = \xi \cup {a}$, and
also a $p$-simplex $\tau_2 \in K$ where $\tau_2 = \xi \cup {b}$.
Therefore, $p \leq \dim(K)$.

Now assume $ab$ does not satisfy the $p$-link condition for some $p
\leq \dim(K)$.  Then there is some $(p-1)$-simplex $\xi \in \Lk a \cap
\Lk b$ that is not in $\Lk ab$.  Therefore, $ab$ does not satisfy the
link condition.
\end{proof}

\begin{wrapfigure}{r}{3in}
\centering
\vspace*{0.15in}
\includegraphics[scale=0.8]{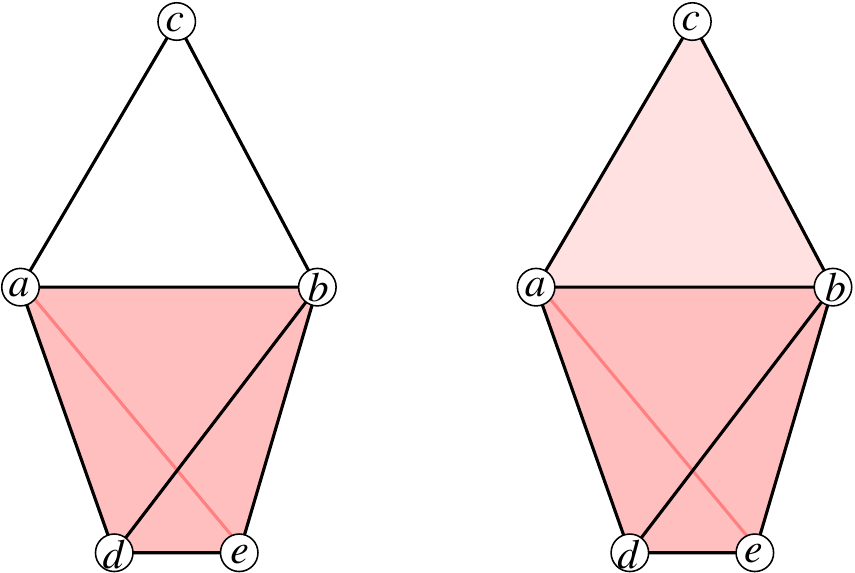}
\caption{$p$-link conditions for $p=1,2$.}
\label{fig:plink}
\vspace*{-0.15in}
\end{wrapfigure}
\noindent As an example, consider the two complexes in
Figure~\ref{fig:plink}.  Both complexes have the lower tetrahedron
adjoining the edge $ab$. The complex on the left has the top triangle
missing. The link of the edge $ab$ contains $d, e, de$ in the left
complex, and it satisfies the $2$-link condition but not the $1$-link
condition. $ab$ satisfies both the $1$-link and the $2$-link
conditions in the complex on the right, since the link of $ab$
contains $c, d, e,$ and $de$.

\section{Homology Preservation}
In this section we study how the $p$-link conditions for edges
being contracted affect the homology groups.  Not surprisingly, the
homology classes in dimension $p$ are maintained intact if the $p$-
and $(p-1)$-link conditions hold.  Specifically, we prove that the
$p$-link condition alone implies that no homology class in
$\homo_p(K)$ is destroyed (injectivity) and the $(p-1)$-link condition
alone implies that no new homology class is created (surjectivity).

\begin{theorem}
  Let $ab$ be an edge in a simplicial complex $K$ and $\gamma_{ab}: K
  \rightarrow K'$ be an edge contraction. Then the induced
  homomorphism $\gamma_{ab*}$ at the homology level has the following
  properties:
  \begin{enumerate}
  \item $\gamma_{ab*}: \homo_p(K) \rightarrow \homo_p(K')$ is
    injective if $ab$ satisfies the $p$-link condition.
  \item $\gamma_{ab*}$ is surjective if $ab$ satisfies the
    $(p-1)$-link condition.
  \end{enumerate}
\label{homo-thm}
\end{theorem}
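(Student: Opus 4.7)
The plan is to work at the chain level. The simplicial map $\gamma_{ab}$ induces a chain map $\gamma_\#\colon\chains_*(K)\to\chains_*(K')$ that sends $\sigma$ to $0$ whenever $\{a,b\}\subseteq\sigma$ and otherwise replaces $b$ by $a$ in $\sigma$. This $\gamma_\#$ is surjective on chain groups (every simplex $\{a\}\cup\xi \in K'$ has $\{a\}\cup\xi$ or $\{b\}\cup\xi$ as a preimage in $K$, thanks to closure under faces), so setting $N_*=\ker\gamma_\#$ gives a short exact sequence $0\to N_*\to\chains_*(K)\stackrel{\gamma_\#}{\to}\chains_*(K')\to 0$. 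The induced long exact homology sequence reduces the theorem to two statements about $N_*$: injectivity of $\gamma_{ab*}$ at dimension $p$ is equivalent to every cycle of $N_p$ bounding in $\chains_{p+1}(K)$, and surjectivity is equivalent to every cycle in $\cycles_p(K')$ admitting a cycle-lift to $K$.

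Next I would describe $N_*$ explicitly. Writing $L:=\Lk a\cap\Lk b$, the kernel decomposes as $N_p=\mathcal{A}_p\oplus\mathcal{D}_p$, where $\phi_{\mathcal{A}}\colon\chains_{p-1}(L)\to\mathcal{A}_p$ is given by $\xi\mapsto\{a\}\cup\xi-\{b\}\cup\xi$, and $\phi_{\mathcal{D}}\colon\chains_{p-2}(\Lk(ab))\to\mathcal{D}_p$ is given by $\zeta\mapsto\{a,b\}\cup\zeta$. A direct boundary calculation yields the block-triangular formulas $\partial\phi_{\mathcal{A}}(\xi)=-\phi_{\mathcal{A}}(\partial\xi)$ (so $\mathcal{A}_*$ is actually a subcomplex of $N_*$) and $\partial\phi_{\mathcal{D}}(\zeta)=-\phi_{\mathcal{A}}(\iota\zeta)+\phi_{\mathcal{D}}(\partial\zeta)$, where $\iota\colon\chains_*(\Lk(ab))\hookrightarrow\chains_*(L)$ is the chain inclusion induced by $\Lk(ab)\subseteq L$. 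Thus the off-diagonal block $\mathcal{D}_p\to\mathcal{A}_{p-1}$ is, up to sign, precisely $\iota$ at degree $p-2$.

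At this point each link condition plays a direct role. The $(p-1)$-link condition says every $(p-2)$-simplex of $L$ already lies in $\Lk(ab)$, i.e.\ $\iota$ is an isomorphism at degree $p-2$. Given any preimage $\tilde z\in\chains_p(K)$ of $z'\in\cycles_p(K')$, one has $\partial\tilde z\in N_{p-1}$; using this isomorphism, the $\mathcal{A}_{p-1}$-component of $\partial\tilde z$ is absorbed by adding a suitable element of $\mathcal{D}_p$ to $\tilde z$ (which does not disturb $\gamma_\#\tilde z=z'$), leaving a residual in $\mathcal{D}_{p-1}$. A short $\partial^2=0$ chase, together with the injectivity of $\iota$ in degree $p-3$, forces this residual to vanish, so the modified $\tilde z$ is an honest cycle-lift. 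The $p$-link condition says $\iota$ is an isomorphism at degree $p-1$; for any cycle $z=\phi_{\mathcal{A}}(c_{\mathcal{A}})+\phi_{\mathcal{D}}(c_{\mathcal{D}})\in N_p$ we then have $c_{\mathcal{A}}\in\chains_{p-1}(\Lk(ab))$, and a direct boundary computation shows $w:=\phi_{\mathcal{D}}(-c_{\mathcal{A}})\in\mathcal{D}_{p+1}\subseteq\chains_{p+1}(K)$ satisfies $\partial w=z$, so $z$ bounds in $K$.

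The main obstacle I expect is the setup: establishing the decomposition $N_p=\mathcal{A}_p\oplus\mathcal{D}_p$ and verifying the block-triangular boundary formulas with the correct signs. Everything afterward is essentially diagram chasing, and both link conditions boil down to the single clean statement that the chain inclusion $\iota\colon\chains_*(\Lk(ab))\hookrightarrow\chains_*(L)$ is an isomorphism at the appropriate degree.
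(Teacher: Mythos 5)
Your proposal is correct, and it takes a genuinely different route from the paper. The paper factors $\gamma_{ab}$ through an intermediate complex $\hat{K}=K\cup(a*\overline{\St b})$ as an inclusion followed by a deformation retraction, and then runs a Mayer--Vietoris argument on $\hat{K}=K\cup S$: the $p$-link (resp.\ $(p-1)$-link) condition forces $\homo_p(K\cap S)=0$ (resp.\ $\homo_{p-1}(K\cap S)=0$) because no simplex of the offending dimension gets added to $\overline{\St b}$, whence $i_*$ is injective (resp.\ surjective). You instead stay entirely at the chain level: the short exact sequence $0\to N_*\to\chains_*(K)\to\chains_*(K')\to 0$ with $N_*$ identified, via your block-triangular boundary formulas, as (a shift of) the mapping cone of the inclusion $\iota\colon\chains_*(\Lk ab)\hookrightarrow\chains_*(\Lk a\cap\Lk b)$, so that each link condition becomes the statement that $\iota$ is the identity in one specific degree. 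I checked your decomposition $N_p=\mathcal{A}_p\oplus\mathcal{D}_p$, the signs in $\partial\phi_{\mathcal{A}}(\xi)=-\phi_{\mathcal{A}}(\partial\xi)$ and $\partial\phi_{\mathcal{D}}(\zeta)=-\phi_{\mathcal{A}}(\iota\zeta)+\phi_{\mathcal{D}}(\partial\zeta)$, and both diagram chases; they all work (note that the cycle condition on $z=\phi_{\mathcal{A}}(c_{\mathcal{A}})+\phi_{\mathcal{D}}(c_{\mathcal{D}})$ forces $c_{\mathcal{D}}=-\partial c_{\mathcal{A}}$, which is exactly what makes $\partial\phi_{\mathcal{D}}(-c_{\mathcal{A}})=z$). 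Your approach buys elementarity and constructiveness: it avoids the deformation-retraction step (which the paper asserts without proof) and Mayer--Vietoris, and it produces explicit cycle lifts and explicit bounding chains, which is relevant to the paper's stated goal of pulling representative cycles back through contractions; the paper's approach is shorter given the standard machinery and is more geometrically transparent. Two small points to tidy up in a full write-up: the injectivity of $\iota$ in degree $p-3$ that you invoke is automatic (it is an inclusion of subcomplexes, injective in every degree), so no hidden hypothesis is needed there; and for $p\le 1$ the identification $\mathcal{D}_p\cong\chains_{p-2}(\Lk ab)$ requires the empty-simplex (augmentation) convention so that the edge $ab$ itself generates $\mathcal{D}_1$ --- with that convention your formulas go through unchanged.
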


To prove the above theorem, we use an intermediate complex $\hat{K}$
constructed as follows. Let the cone $v*\sigma$ for a vertex $v\in K$
and a simplex $\sigma\in K$ be defined as the closure of the simplex
$\sigma\cup \{v\}$. The cone $v*T$ for a subcomplex $T\subseteq K$ is
the complex $v*T=\{v*\sigma\,|\, \sigma\in T\}$.  We construct
$\hat{K}= K \cup (a* \overline{\St b})$. In words, $\hat{K}$ is
constructed out of $K$ by adding simplices that are obtained by coning
from $a$ to the closed star of $b$. First, observe that the
$a*\overline{\St b}$ in $\hat{K}$ can deformation retract to $a*\Lk b$
taking $b$ to $a$, and we get $K'$. Therefore, we have the sequence
$K\stackrel{i}{\hookrightarrow} \hat{K}\stackrel{r}{\rightarrow} K'$
where $i$ and $r$ are an inclusion and a deformation retraction,
respectively, and $\gamma_{ab}= r\circ i$; see
Figure~\ref{inc-retract} for an illustration.  At the homology level
we have the sequence of two homomorphisms where the one on right is an
isomorphism induced by a deformation retract.
$$
\homo_p(K)\stackrel{i_*}{\rightarrow}\homo_p(\hat{K})
\stackrel{r_*}{\rightarrow} \homo_p(K').
$$
Since $\gamma_{ab*}=r_*\circ i_*$ at the homology level and $r_*$ is
an isomorphism, we have that $\gamma_{ab*}$ is injective or surjective
if and only if $i_*$ is.
\begin{figure}[hb!]
\centering
\includegraphics[scale=0.67]{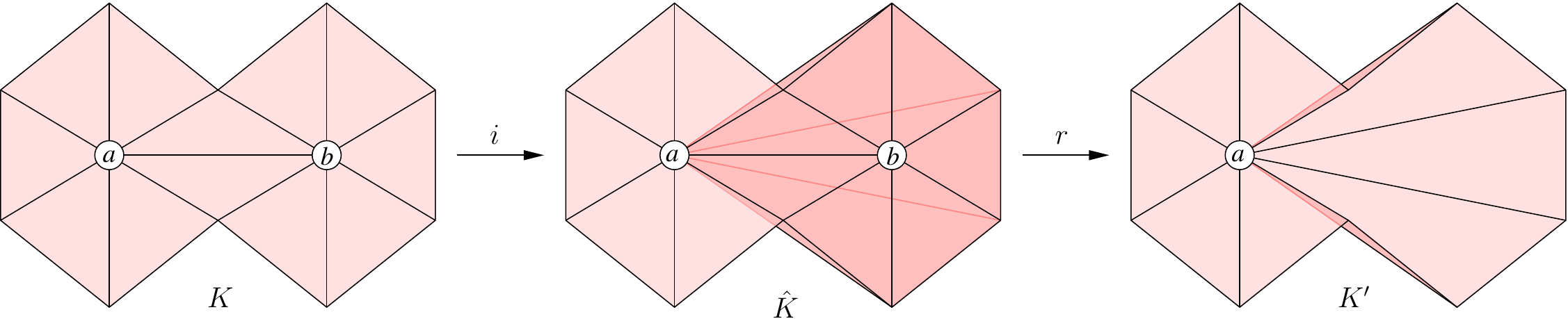}
\caption{Edge contraction as a composition of an inclusion and a retraction.}
\label{inc-retract}
\end{figure}

\begin{proposition}
  $i_*: \homo_p(K)\rightarrow \homo_p(\hat{K})$ is injective if $ab$
  satisfies the $p$-link condition, and is surjective if $ab$
  satisfies the $(p-1)$-link condition.
\label{homology-pres}
\end{proposition}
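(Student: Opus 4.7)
The plan is to analyze $i_*$ through the long exact sequence of the pair $(\hat K,K)$: it suffices to show $\homo_{p+1}(\hat K,K)=0$ for injectivity on $\homo_p$ and $\homo_p(\hat K,K)=0$ for surjectivity. Everything then reduces to computing $\homo_*(\hat K,K)$.

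Since every simplex of $\hat K\setminus K$ lies in $A:=a*\overline{\St b}$, simplicial excision gives $\homo_p(\hat K,K)\cong \homo_p(A,A\cap K)$; and because $A$ is a cone, hence contractible, the long exact sequence of $(A,A\cap K)$ collapses to $\homo_p(A,A\cap K)\cong \widetilde{\homo}_{p-1}(A\cap K)$. The bulk of the work is thus understanding $\widetilde{\homo}_*(A\cap K)$.

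A direct inspection of which simplices lie in both $A$ and $K$ shows $A\cap K=\overline{\St b}\cup(a*X)$ with $X:=\Lk a\cap\Lk b$, and that a simplex $\{a\}\cup\tau$ (with $\tau\in X$, $a\notin\tau$) lies in $\overline{\St b}$ exactly when $\tau\in\Lk(ab)$. Therefore the generators of the relative chain complex $\chains_*(A\cap K,\overline{\St b})$ are precisely the simplices $\{a\}\cup\tau$ with $\tau\in X\setminus\Lk(ab)$, and the ``remove $a$'' map identifies this complex with $\chains_{*-1}(X,\Lk(ab))$, up to a sign in the boundary. Hence $\homo_n(A\cap K,\overline{\St b})\cong \homo_{n-1}(X,\Lk(ab))$ for every $n$. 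Since $\overline{\St b}$ is contractible, the long exact sequence of the pair $(A\cap K,\overline{\St b})$ gives $\widetilde{\homo}_n(A\cap K)\cong \homo_n(A\cap K,\overline{\St b})$. Stringing the three isomorphisms together yields
$$\homo_p(\hat K,K)\;\cong\;\widetilde{\homo}_{p-1}(A\cap K)\;\cong\;\homo_{p-2}(X,\Lk(ab)).$$

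With this identification in hand the link conditions finish the proof trivially. By Definition~\ref{def-linkcond}, the $p$-link condition is precisely the statement that every $(p-1)$-simplex of $X=\Lk a\cap\Lk b$ lies in $\Lk(ab)$, so $\chains_{p-1}(X,\Lk(ab))=0$ and hence $\homo_{p-1}(X,\Lk(ab))=\homo_{p+1}(\hat K,K)=0$, giving injectivity of $i_*$ on $\homo_p$. The same reasoning one dimension lower turns the $(p-1)$-link condition into $\homo_{p-2}(X,\Lk(ab))=\homo_p(\hat K,K)=0$, giving surjectivity. The main obstacle is the bookkeeping that establishes the key identification $\homo_p(\hat K,K)\cong\homo_{p-2}(X,\Lk(ab))$; once that is in place, the link hypotheses enter only at the very last step, in the trivial observation that a chain group generated by an empty set of simplices vanishes.
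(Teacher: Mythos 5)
Your proof is correct, but it takes a genuinely different route from the paper's. The paper writes $\gamma_{ab}$ as $K\hookrightarrow\hat K\to K'$ exactly as you do, but then applies the Mayer--Vietoris sequence to the decomposition $\hat K=K\cup S$ with $S=a*\overline{\St b}$: since $S$ is a cone, everything reduces to showing $\homo_p(K\cap S)=0$ (for injectivity) and $\homo_{p-1}(K\cap S)=0$ (for surjectivity), which the paper gets by observing that $K\cap S$ is obtained from the acyclic complex $\overline{\St b}$ by coning $a$ over $\Lk a\cap\Lk b$, and that the relevant link condition forbids any added simplex of the critical dimension. You instead use the long exact sequence of the pair $(\hat K,K)$ together with simplicial excision and the two cone contractions ($A$ and $\overline{\St b}$) to establish the sharper identification $\homo_p(\hat K,K)\cong\homo_{p-2}(\Lk a\cap\Lk b,\,\Lk ab)$, after which the link conditions enter only as the vanishing of a relative chain group. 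The underlying combinatorial content --- the description of $A\cap K=K\cap S$ as $\overline{\St b}\cup(a*(\Lk a\cap\Lk b))$ and the observation that $\{a\}\cup\tau$ lies in $\overline{\St b}$ exactly when $\tau\in\Lk ab$ --- is the same in both arguments, and all your intermediate isomorphisms check out (including the dimension bookkeeping and the degenerate low-$p$ cases, where $\chains_{-1}=0$). What your version buys is an exact computation of the relative homology of the inclusion, which localizes the obstruction precisely in the pair $(\Lk a\cap\Lk b,\Lk ab)$ and would in principle support finer statements; what the paper's version buys is brevity, since Mayer--Vietoris delivers the injectivity/surjectivity directly from the vanishing of two absolute homology groups without needing the chain-level identification.
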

\begin{proof}
  Since $p$-dimensional homology is determined only by the skeleton up
  to dimension $p+1$, assume that $K$ is only $(p+1)$-dimensional.
  Let $S$ denote the subcomplex $a* \overline{\St b}$. The
  Mayer-Vietoris sequence
  $$
  \homo_p(K\cap S) \stackrel{(j_*,k_*)}{\rightarrow} \homo_p(K)\oplus
  \homo_p(S) \stackrel{(i_*-\ell_*)}{\rightarrow}
  \homo_p(\hat{K})\stackrel{\pi}{\rightarrow} \homo_{p-1}(K\cap S)
  $$
  is exact where $i,j,k,\ell$ are respective inclusion maps and $\pi$
  is the connecting homomorphism~\cite{Hatcher2002}.  We examine the
  maps $j_*,k_*,\ell_*$ and prove the required properties for $i_*$.

  First consider that $ab$ satisfies the $p$-link condition. We
  examine the the complex $K\cap S$. It is the union of $\overline{\St
    b}$ and the added simplices that are cones of $a$ to the simplices
  in $\Lk a \cap \Lk b$. None of these added simplices can be
  $p$-dimensional since otherwise the $p$-link condition is violated.
  Any closed star has trivial homology, and thus
  $\homo_p(\overline{\St b})=0$.  Since no simplex of dimension $p$ or
  more is added to create $K\cap S$ from $\overline{\St b}$, we still
  have $\homo_p(K\cap S)=0$.  Therefore, $\image j_*$ and $\image k_*$
  are trivial. Because of the exactness of the sequence, the map
  $i_*-\ell_*$ is injective. However, $\ell_*$ is a zero map since
  $S=a*\overline{\St b}\,$ has $\homo_p(S)=0$.  It follows that $i_*$
  is injective as we are required to prove.

  Next, consider that $ab$ satisfies the $(p-1)$-link condition. By
  the same logic as above, there is no $(p-1)$-simplex that can be
  added to $\overline{\St b}$ to create $K\cap S$. Therefore,
  $\homo_{p-1}(K\cap S)=0$ which implies that $\image \pi$ is trivial.
  Because of the exactness, we have that $(i_*-\ell_*)$ is surjective
  which implies $i_*$ is surjective as $\ell_*$ is a zero map.
\end{proof}


\section{Relative Homology Preservation}
In this section we study how the $p$-link conditions affect the
relative homology groups.  The motivation comes from a recent result
on the optimal homologous cycle problem (OHCP) whose efficient
solution depends on relative torsion and hence on relative homology
classes \cite{DeHiKr2011}.

First, we set up some background.  Let $L_0$ be a subcomplex of a
simplicial complex $L$. The quotient group
$\chains_p(L)/\chains_p(L_0)$ is called the group of \emph{relative
  $p$-chains} of $L$ modulo $L_0$ and is denoted $\chains_p(L,L_0)$.
The boundary operator $\boundary_p \colon \chains_p(L)\rightarrow
\chains_{p-1}(L)$ and its restriction to $L_0$ induce a homomorphism
\[
\rltvbndry{p}{L}{L_0} \colon \chains_p(L,L_0) \rightarrow
\chains_{p-1}(L,L_0)\, .
\]
Writing $\cycles_p(L,L_0)={\rm ker}\rltvbndry{p}{L}{L_0}$ for
\emph{relative cycles} and $\bdys_{p-1}(L,L_0)=\image
\rltvbndry{p}{L}{L_0}$ for \emph{relative boundaries}, we obtain the
\emph{relative homology group} $\homo_p(L,L_0) =
\cycles_p(L,L_0)/\bdys_p(L,L_0)$.

A {\em pure simplicial complex} of dimension $p$ is formed by a
collection of $p$-simplices and their proper faces.  We consider
relative homology groups $\homo_p(L,L_0)$ where $L \subseteq K$ and
$L_0 \subset L$ are pure subcomplexes of dimensions $(p+1)$ and
$p$, respectively.

To study the effect of edge contraction on relative homology groups,
we note that the simplicial map $\gamma_{ab}: K \rightarrow K'$
naturally extends to pairs as $\gamma_{ab}: (L,L_0)\rightarrow
(L',L_0')$ where $\gamma_{ab}(L)=L'$ and $\gamma_{ab}(L_0)=L_0'$.
Notice that since $L$ is a subcomplex of $K$, it may not satisfy some
link conditions even if $K$ does. Therefore, it is possible apriori
that relative homology groups may not be controlled by putting
conditions on the edges being contracted in $K$. Nevertheless, we show that for
every pair of subcomplexes $(L',L_0')$ in $K'$, there is a pair of
subcomplexes $(L,L_0)$ in $K$ such that $\gamma_{ab}(L,L_0) =
(L',L_0')$ and $\gamma_{ab*}: \homo_p(L,L_0) \rightarrow
\homo_p(L',L_0')$ is surjective if $ab$ satisfies the 
$(p-1)$-, and the $(p-2)$-link conditions in $K$.  This implies that no new
relative homology classes are created by such an edge contraction. 
In contrast, however, relative homology classes can be killed by edge
contractions even if it satisfies all link conditions. We develop these
results now.

The following result known as {\em five lemma}
in algebraic topology~\cite[Theorem 5.10]
{Rotman} lets us connect our result for homology to
relative homology groups.
\begin{theorem}[\cite{Rotman}.]
  Let $h:(L,L_0)\rightarrow(L',L_0')$ be a simplicial map. 
Let$f_{i}:\homo_i(L)\rightarrow \homo_p(L')$ and  
 $g_{i}:\homo_i(L_0)\rightarrow \homo_p(L_0')$ be
the homomorphisms induced by $h$ in the absolute
homology groups for $i=p, p-1$. 
The following statements hold:
\begin{enumerate}
\item[(i)] If $f_p$ and $g_{p-1}$ are surjective, and $f_{p-1}$ is injective,
then $ h_*:\homo_p(L,L_0)\rightarrow \homo_p(L',L_0') $ is surjective;
\item[(ii)] If $f_p$ and $g_{p-1}$ are injective, and $g_p$ is surjective,
then $ h_*:\homo_p(L,L_0)\rightarrow \homo_p(L',L_0') $ is injective;
\item[(iii)] If $f_i$ and $g_i$ are isomorphisms for $i=p, p-1$,
then $ h_*:\homo_p(L,L_0)\rightarrow \homo_p(L',L_0') $ is an isomorphism.
\end{enumerate}
\label{reltop-iso-thm}
\end{theorem}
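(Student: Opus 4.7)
The plan is to apply a standard diagram chase to the commutative ladder formed by the long exact sequences of the two pairs. Since the statement is essentially a repackaging of the classical Five Lemma (as the Rotman citation suggests), my proof would mostly consist of setting up the ladder correctly and then invoking the appropriate sharpening of the Five Lemma (sometimes called the Four Lemma) for each of the three conclusions.

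First I would recall that for any pair $(L,L_0)$ of simplicial complexes there is a long exact sequence
\[
\cdots \to \homo_p(L_0) \xrightarrow{j_*} \homo_p(L) \xrightarrow{q_*} \homo_p(L,L_0) \xrightarrow{\partial_*} \homo_{p-1}(L_0) \xrightarrow{j_*} \homo_{p-1}(L) \to \cdots
\]
and similarly for $(L',L_0')$, where the connecting homomorphism $\partial_*$ is induced by the boundary operator on relative chains. Because $h$ is simplicial and satisfies $h(L_0)\subseteq L_0'$, it is a map of pairs, and naturality of the long exact sequence yields a commutative ladder in which the vertical arrows are precisely $g_p$, $f_p$, $h_*$, $g_{p-1}$, $f_{p-1}$ (and, to the left, $g_p$ again followed by $f_p$, which will not actually be needed).

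Given this ladder, each of the three assertions becomes a standard diagram chase. For part (i), I would take a class $[z']\in \homo_p(L',L_0')$, apply $\partial_*$ to land in $\homo_{p-1}(L_0')$, lift it through the surjection $g_{p-1}$ to some $\alpha\in \homo_{p-1}(L_0)$, check using injectivity of $f_{p-1}$ and exactness that $\alpha\in \ker j_* = \image \partial_*$, pull back to some $[z]\in \homo_p(L,L_0)$, and then compare $h_*[z]$ with $[z']$ on the nose: their difference lies in $\image q_*$, and surjectivity of $f_p$ lets me modify $[z]$ inside $\image q_*$ so that $h_*[z]=[z']$. Part (ii) is the dual chase: start with $[z]\in \ker h_*$, use exactness and surjectivity of $g_p$ together with injectivity of $f_p$ and $g_{p-1}$ to conclude $[z]=0$. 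Part (iii) is an immediate consequence of (i) and (ii).

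The proof is almost entirely bookkeeping, so the only real obstacle is stating the ladder diagram cleanly enough that the reader can follow each chase without getting lost in subscripts; in particular one must be careful that the maps denoted $f_i$ and $g_i$ in the theorem's hypotheses correspond exactly to the vertical arrows in columns of the ladder, and that the connecting homomorphism $\partial_*$ used in the chase is the one induced by $\rltvbndry{p}{L}{L_0}$ rather than an absolute boundary. Once this correspondence is made explicit, the three conclusions follow from the Four Lemma / Five Lemma as stated, for which I would simply cite \cite[Theorem 5.10]{Rotman}.
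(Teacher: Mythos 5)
Your proposal is correct and is exactly the argument the paper intends: the paper gives no proof of this statement, citing it directly as \cite[Theorem 5.10]{Rotman}, and your commutative ladder of long exact sequences of the pairs $(L,L_0)$ and $(L',L_0')$ followed by the two Four Lemma chases (and their conjunction for part (iii)) is the standard proof of that cited result, with the hypotheses on $f_p, f_{p-1}, g_p, g_{p-1}$ matching the vertical arrows in the correct columns.
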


If we take the map $h$ to be the restriction of the 
simplicial map $\gamma_{ab}$ to $L$ and its subcomplex $L_0$, we 
can use Theorem~\ref{homo-thm} to arrive at the following result:
\begin{theorem}
  Let $L\subseteq K$ be any pure subcomplex of dimension $p+1$ and
  $L_0\subset L$ be any of its pure subcomplexes of dimension $p$.
  Let $(L',L_0')=\gamma_{ab}(L,L_0)$. Then, the following hold:

\begin{enumerate}
\item[(i)] If $ab$ satisfies the $(p-2)$- and $(p-1)$-link conditions in $L_0$,\\ 
then $\gamma_{ab_*}: \homo_p(L,L_0) \rightarrow \homo_p(L',L_0')$ is surjective;
\item[(ii)] If $ab$ satisfies the $(p-1)$- and $p$-link conditions in $L_0$,\\ 
then $\gamma_{ab_*}: \homo_p(L,L_0) \rightarrow \homo_p(L',L_0')$ is injective;
\item[(iii)] If $ab$ satisfies the $(p-2)$- and $(p-1)$-, and 
$p$-link conditions in $L_0$,\\ 
then $\gamma_{ab_*}: \homo_p(L,L_0) \rightarrow \homo_p(L',L_0')$ is 
an isomorphism.
\end{enumerate}
\label{aux-topo-thm}
\end{theorem}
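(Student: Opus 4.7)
The plan is to derive Theorem~\ref{aux-topo-thm} from the five lemma (Theorem~\ref{reltop-iso-thm}) applied to the commutative ladder of long exact sequences of the pairs $(L,L_0)$ and $(L',L_0')$ under the induced map of pairs $\gamma_{ab}:(L,L_0)\to (L',L_0')$, with vertical arrows the maps $f_i:\homo_i(L)\to \homo_i(L')$ and $g_i:\homo_i(L_0)\to \homo_i(L_0')$ obtained by restricting $\gamma_{ab}$ to $L$ and to $L_0$ respectively. Each of the three claims then reduces to verifying the appropriate surjectivity/injectivity statements for $f_p, f_{p-1}, g_p, g_{p-1}$: part (i) needs $f_p$ surjective, $f_{p-1}$ injective, and $g_{p-1}$ surjective; part (ii) needs $f_p$ injective, $g_p$ surjective, and $g_{p-1}$ injective; and part (iii) needs all four maps to be isomorphisms, which follows from combining (i) and (ii).

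For the $g_i$ maps, I would apply Theorem~\ref{homo-thm} directly to the edge contraction $\gamma_{ab}$ restricted to $L_0$. The $(p-2)$-link condition in $L_0$ yields surjectivity of $g_{p-1}$; the $(p-1)$-link condition in $L_0$ yields both injectivity of $g_{p-1}$ and surjectivity of $g_p$; and the $p$-link condition in $L_0$ yields injectivity of $g_p$. Collecting cases, each set of hypotheses stated in parts (i)--(iii) is exactly what is needed to get the required properties for $g_{p-1}$ and $g_p$.

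For the $f_i$ maps, the plan is to re-run the Mayer--Vietoris argument of Proposition~\ref{homology-pres} applied to $L$ using $S_L = a\ast\overline{\St_L b}$. As in that proof, surjectivity of $f_q$ follows from $\homo_{q-1}(L\cap S_L)=0$ and injectivity from $\homo_q(L\cap S_L)=0$, each of which is guaranteed by the absence of $q$-simplices of the form $a\ast\xi$ with $\xi\in\Lk_L a\cap\Lk_L b\setminus\Lk_L ab$. Because $L$ is pure of dimension $p+1$ and $L_0$ is pure of dimension $p$, the $(p-2)$- and $(p-1)$-dimensional cone simplices that could obstruct these vanishings lie entirely among simplices of $L$ whose relevant faces sit inside $L_0$; thus link conditions stated in $L_0$ at dimensions $p-2, p-1, p$ translate into exactly the absence of the forbidden simplices in $L\cap S_L$ needed to conclude the required injectivity/surjectivity of $f_p$ and $f_{p-1}$.

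The principal obstacle is precisely this last reduction: justifying that link conditions verified inside $L_0$ suffice to force the Mayer--Vietoris vanishings for the larger complex $L$. This does not follow formally from the link-condition definition alone, since $\Lk_L a\cap\Lk_L b$ may strictly contain $\Lk_{L_0} a\cap\Lk_{L_0} b$. Resolving it requires a careful simplex-by-simplex accounting using the purity of $L$ and $L_0$ to show that any $(p-2)$- or $(p-1)$-simplex in $\Lk_L a\cap\Lk_L b$ witnessing a failure of the corresponding $L$-link condition would necessarily already be supported in $L_0$, so its absence is guaranteed by the stated hypothesis. I expect this bookkeeping, rather than any further topological input, to be the trickiest step; once it is in place, the five lemma delivers all three conclusions uniformly.
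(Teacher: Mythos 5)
Your skeleton is exactly the paper's: restrict $\gamma_{ab}$ to $L$ and to $L_0$, feed the resulting absolute-homology statements from Theorem~\ref{homo-thm} into the five lemma (Theorem~\ref{reltop-iso-thm}), and read off (i)--(iii). Your treatment of the maps $g_i$ on $L_0$ is fine and matches the paper, since there the hypotheses are literally link conditions in $L_0$. The divergence is in how the maps $f_i$ on $L$ are obtained. The paper disposes of this in one sentence, asserting that if $ab$ satisfies the $i$-link condition in $L_0$ then it satisfies it in $L$ as well, and then applies Theorem~\ref{homo-thm} to $L$ directly. You correctly sense that this implication is not automatic, but the repair you sketch is where your proposal breaks.

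Concretely, you claim that any $(p-2)$- or $(p-1)$-simplex $\xi$ lying in $\Lk a\cap\Lk b$ computed in $L$ and witnessing a failure of the corresponding link condition in $L$ ``would necessarily already be supported in $L_0$,'' so that the hypothesis in $L_0$ rules it out. Purity gives you no such thing, because $L_0$ is an arbitrary pure $p$-dimensional subcomplex, not the $p$-skeleton of $L$. For instance, with $p=2$ let $L$ be a pure $3$-complex containing tetrahedra $avxy$, $bvzw$, $abpq$ but no simplex containing $\{a,b,v\}$; the vertex $v$ then witnesses failure of the $(p-1)$-link condition in $L$. Choosing $L_0$ to be any pure $2$-subcomplex omitting the edge $av$, the $(p-1)$-link condition holds in $L_0$ while failing in $L$, and nothing forces $\xi=\{v\}$ or its cone simplices into $L_0$. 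So the reduction from $L_0$-link conditions to the Mayer--Vietoris vanishings for $L\cap(a*\overline{\St b})$ is not bookkeeping; the statement you plan to prove is false as formulated, and the required surjectivity of $f_p$ and injectivity of $f_{p-1}$ is left unestablished. (The paper's own proof rests on the same one-line assertion without justification; but since your plan explicitly makes proving that assertion the load-bearing step, the proposal as written does not close.)
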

\begin{proof}
  We prove only (i) from which the proofs for (ii) and (iii) become obvious.  
  If $ab$ satisfies the $i$-link conditions for $i= p,p-2, p-1$
  in $L_0$, then it satisfies the same conditions for $L$ as well.
  Now apply Theorem~\ref{homo-thm} from previous section to
  conclude the following:
\begin{eqnarray*}
   &&\gamma_{ab*}:\homo_p(L)\rightarrow \homo_p(L') \mbox{ is surjective 
  if $ab$ satisfies $(p-1)$-link condition in $L_0$}\\
   &&\gamma_{ab*}:\homo_{p-1}(L_0)\rightarrow \homo_{p-1}(L_0') 
\mbox{ is surjective if $ab$ satisfies $(p-2)$-link condition in $L_0$}\\
   &&\gamma_{ab*}:\homo_{p-1}(L)\rightarrow \homo_{p-1}(L') 
\mbox{ is injective if $ab$ satisfies $(p-1)$-link condition in $L_0$}
\end{eqnarray*}
  Now apply Theorem~\ref{reltop-iso-thm}(i) to finish the proof of (i).
\end{proof}

It is important to notice that the required link conditions in
Theorem~\ref{aux-topo-thm} need to be satisfied in $L_0\subset L \subseteq K$.
It is not true that if $ab$ satisfies the $i$-link condition in
$K$, then it does so in $L$ and $L_0$. 
However, for $i=p-1,p-2$, we have the following observation
which allows
us to extend the results in Theorem~\ref{aux-topo-thm}(i) to
the case when $ab$ satisfies $(p-1)$- and $(p-2)$-link conditions in $K$.

%
\begin{proposition}
  Let $\gamma_{ab}: K\rightarrow K'$.
  Let $(L',L_0')$ be any pair of subcomplexes of dimensions $p+1$ and
  $p$, respectively, where $L' \subseteq K'$ and $L_0' \subset
  L'$. Let $L'$ contain the vertex $a$.  There exists a pair of
  subcomplexes $(L,L_0)$ of dimensions $p+1$ and $p$, respectively,
  where $L \subseteq K$ and $L_0 \subset L$ so that $\gamma_{ab}:
  (L,L_0) \rightarrow (L',L_0')$ and, for $i=p-1,p-2$, $ab$ satisfies the
  $i$-link condition for $L_0$ if it does so
  for $K$.  
\label{prop_LL}
\end{proposition}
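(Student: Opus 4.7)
The plan is to define $L$ and $L_0$ as dimension-capped preimages of $L'$ and $L_0'$ under $\gamma_{ab}$:
\[
L := \{\sigma \in K : \gamma_{ab}(\sigma) \in L',\ \dim \sigma \leq p+1\}, \quad L_0 := \{\sigma \in K : \gamma_{ab}(\sigma) \in L_0',\ \dim \sigma \leq p\}.
\]
Both sets are closed under taking faces, since if $\tau \subseteq \sigma \in L$ then $\gamma_{ab}(\tau) \subseteq \gamma_{ab}(\sigma) \in L'$ and the dimension only decreases; similarly for $L_0$. The containment $L_0 \subseteq L$ is immediate from $L_0' \subseteq L'$. To verify that $\gamma_{ab}$ restricts to surjections $L \to L'$ and $L_0 \to L_0'$, I would show that every simplex $\sigma'$ of the codomain admits a preimage in $K$ of the same dimension, hence of dimension within the cap. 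If $a \notin \sigma'$ then the unique preimage is $\sigma'$ itself, which surjectivity of $\gamma_{ab}$ places in $K$; if $a \in \sigma'$ then at least one of the same-dimensional candidates $\sigma'$ and $(\sigma' \setminus \{a\}) \cup \{b\}$ is in $K$, since if only the strictly higher-dimensional preimage $\sigma' \cup \{b\}$ lay in $K$, face-closure of $K$ would already force both lower preimages into $K$.

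The core step is transferring the link conditions. Assume the $(p-1)$-link condition holds in $K$ and let $\xi$ be a $(p-2)$-simplex with $\xi \cup \{a\}, \xi \cup \{b\} \in L_0$. Because $L_0 \subseteq K$, these are $(p-1)$-simplices of $K$, so the hypothesis yields the $p$-simplex $\xi \cup \{a,b\} \in K$. Its $\gamma_{ab}$-image is $\xi \cup \{a\}$, which is a simplex of $L_0$ not containing $b$, hence is fixed by $\gamma_{ab}$ and lies in $\gamma_{ab}(L_0) = L_0'$. Together with $\dim(\xi \cup \{a,b\}) = p$, this places $\xi \cup \{a,b\}$ in $L_0$, proving the $(p-1)$-link condition in $L_0$. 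The $(p-2)$-link condition is established by the identical template shifted down by one index: for a $(p-3)$-simplex $\xi'$ with $\xi' \cup \{a\}, \xi' \cup \{b\} \in L_0$, the $(p-2)$-link in $K$ yields the $(p-1)$-simplex $\xi' \cup \{a,b\} \in K$ whose image $\xi' \cup \{a\}$ already lies in $L_0'$.

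The main obstacle I anticipate is the bookkeeping around preimages that involve $b$ in place of $a$. Whenever $a \in \sigma'$ and $\sigma'$ itself is not a simplex of $K$, the surjective lift is forced to be $(\sigma' \setminus \{a\}) \cup \{b\}$, and one must verify that incorporating such preimages neither violates the dimension caps of $L$ and $L_0$ nor pulls extraneous simplices into the images. An exhaustive case split on whether each simplex contains $a$, $b$, both, or neither, combined with face-closure of $K$, should dispatch these remaining details.
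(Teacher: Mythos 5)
Your proposal is correct and follows essentially the same route as the paper: take $L$ and $L_0$ to be the $(p+1)$- and $p$-skeleta of the preimages of $L'$ and $L_0'$, and observe that the simplex $\xi\cup\{a,b\}$ supplied by the $i$-link condition in $K$ has dimension at most $p$ and maps into $L_0'$, hence already lies in $L_0$. Your additional verification that same-dimensional preimages always exist (so the dimension caps do not break surjectivity of the restriction) is a detail the paper leaves implicit, but it does not change the argument.
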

\begin{proof}
  Consider the preimage of $L'$ under $\gamma_{ab}$ and take $L$ as
  its $(p+1)$-dimensional skeleton. Similarly take $L_0$ as the
  $p$-dimensional skeleton of the preimage of $L_0'$. For
  $i=p-1, p-2$, if there is a
  simplex $\sigma$ of dimension $(i-1)$,
  in $\Lk a\cap \Lk b$ in $L_0$, then the simplex
  $ab * \sigma$ has to be present in $L_0$ since this simplex is in $K$
  because $ab$ satisfies the $i$-link
  condition in $K$.  Therefore, $ab$ satisfies the same link
  condition in $L_0$ as well. 
\end{proof}

Notice that the above observation does not include the $p$-link condition.
Since the $p$-link condition may require a $(p+1)$-simplex,
the edge $ab$ may not satisfy the $p$-link condition
in the $p$-dimensional complex $L_0$, even if it does so in $K$. 
This is why we cannot extend Theorem~\ref{aux-topo-thm}(ii) and (iii)
in the result below which
relates the surjectivity of the relative homology groups 
with the link conditions
in the original input complex $K$ instead of a subcomplex. 
The proof of this theorem follows from applying Proposition~\ref{prop_LL} to
Theorem~\ref{aux-topo-thm}(i).

\begin{theorem}
  Let $L'\subseteq K'$ be any pure subcomplex of dimension $p+1$ and
$L_0'\subset L'$ be any of its pure subcomplex of dimension $p$,
where $K'=\gamma_{ab}(K)$. There exists a pair of pure
subcomplexes $L,L_0$ of dimensions $p+1$ and $p$, respectively with
$L\subseteq K$ and $L_0\subset L$ so that:\\

\noindent
If $ab$ satisfies $(p-2)$- and $(p-1)$-link conditions in $K$,
then $\gamma_{ab_*}: \homo_p(L,L_0) \rightarrow \homo_p(L',L_0')$ 
is surjective.
\label{main-topo-thm}
\end{theorem}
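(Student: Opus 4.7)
The plan is to combine Proposition~\ref{prop_LL} with Theorem~\ref{aux-topo-thm}(i): the former supplies a pair $(L,L_0) \subseteq K$ on which the hypotheses of the latter are met, and the latter then delivers the desired surjectivity in relative homology. The proof is essentially an assembly of these two results and requires no new algebraic topology.

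First, I would construct $(L,L_0)$ by pulling back along $\gamma_{ab}$: take $L$ to be the $(p+1)$-dimensional skeleton of $\gamma_{ab}^{-1}(L')$ and $L_0$ the $p$-dimensional skeleton of $\gamma_{ab}^{-1}(L_0')$. Since $L'$ and $L_0'$ are pure of dimensions $p+1$ and $p$ and $\gamma_{ab}$ is a surjective simplicial map, these are pure subcomplexes of the required dimensions with $L_0 \subset L \subseteq K$ and $\gamma_{ab}(L,L_0) = (L',L_0')$. If $a \notin L'$, then no simplex of $\gamma_{ab}^{-1}(L')$ can contain $a$ or $b$ (either choice would force $a$ into $L'$), so $\gamma_{ab}$ restricted to $L$ is the identity and surjectivity on relative homology is automatic; hence I may assume $a \in L'$, which matches the hypothesis of Proposition~\ref{prop_LL}.

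Next, I would invoke Proposition~\ref{prop_LL} to propagate the link conditions from $K$ down to $L_0$. Given any $(i-1)$-simplex $\sigma \in \Lk a \cap \Lk b$ inside $L_0$ with $i \in \{p-1, p-2\}$, the cofaces $a*\sigma$ and $b*\sigma$ already lie in $L_0$; the $i$-link condition in $K$ then places $ab*\sigma$ in $K$, and since $\gamma_{ab}(ab*\sigma) = a*\sigma \in L_0'$ with $\dim(ab*\sigma) = i+1 \le p$, the simplex $ab*\sigma$ lands in the $p$-dimensional skeleton of $\gamma_{ab}^{-1}(L_0')$, namely $L_0$. This shows $\sigma \in \Lk ab$ in $L_0$, so $ab$ satisfies both the $(p-2)$- and $(p-1)$-link conditions in $L_0$.

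Finally, I would apply Theorem~\ref{aux-topo-thm}(i) to the pair $(L,L_0)$, which yields the surjectivity of $\gamma_{ab*}\colon \homo_p(L,L_0) \to \homo_p(L',L_0')$ and completes the argument. The main obstacle is really just the bookkeeping in the middle paragraph: ensuring that the pullback construction delivers pure subcomplexes of exactly the stated dimensions, and that the dimension count $i+1 \le p$ is what forces $ab*\sigma$ into $L_0$ rather than merely into the larger preimage $\gamma_{ab}^{-1}(L_0')$. Once these items are settled, both cited results apply directly and the theorem drops out immediately; note also that the argument cannot be extended to inject\-ivity because the $p$-link condition might require a $(p+1)$-simplex not present in the $p$-dimensional $L_0$, which is exactly why Theorem~\ref{aux-topo-thm}(ii) and (iii) do not admit an analogous statement at the level of $K$.
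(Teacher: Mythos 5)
Your proposal is correct and follows essentially the same route as the paper, which proves this theorem by applying Proposition~\ref{prop_LL} (pulling back $(L',L_0')$ to skeleta of preimages and propagating the $(p-1)$- and $(p-2)$-link conditions into $L_0$) and then invoking Theorem~\ref{aux-topo-thm}(i). Your additional bookkeeping --- the case $a \notin L'$ and the dimension count $i+1 \le p$ --- merely fills in details the paper leaves implicit.
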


\subsection{Implications of Theorem~\ref{main-topo-thm} and 
Theorem~\ref{aux-topo-thm}}
\label{sec:implication}
\paragraph{Surjectivity.}
Notice that Theorem~\ref{main-topo-thm} is sufficient to claim that
edge contractions cannot create any new class in relative homology groups
if the edge $ab$ satisfies only $(p-1)$- and $(p-2)$-link conditions in $K$.
However, it may happen that the sujectivity does not respect torsions, that
is, the preimage of a torsion subgroup in $\homo_p(L',L_0')$ may not
have torsion in $\homo_p(L,L_0)$. This would mean that $(p-1)$- and $(p-2)$-link
conditions are not enough to guarantee that torsion
subgroups have sujection only
from the torsion subgroups of the source space. 
We are interested in link conditions that guarantee that
no new torsion class is created by an edge contraction because
of its connection to the problem of OHCP as discussed in the next section.

We present an example where a new relative torsion indeed appears in the
contracted complex even though the edge satisfies the $(p-1)$- and 
$(p-2)$-link conditions. Consider a sequence of triangles forming 
a M\"{o}bius strip. It is known that M\"{o}bius strip
has a torsion in $\homo_1$ relative to its boundary. Now remove one
triangle, say $abc$, and call the new complex $M$. 
Assume that the edge $ab$ was on the boundary of $M$.
The complex $M$, which is a M\"{o}bius strip
with one triangle removed, 
does not have any relative torsion. However, if we contract the edge $ab$, 
we eliminate the hole created by the removal of $abc$. The
resulting complex now is a M\"{o}bius strip and hence has a relative
torsion. Notice that, for $p=1$,  $ab$ satisfied $(p-1)$- 
and $(p-2)$-link conditions
vacuously though it did not satisfy the $p$-link condition.

The isomorphism result
in Theorem~\ref{aux-topo-thm}(iii) guarantees that no new relative
torsion appears if $ab$ satisfies all three link conditions,
namely $p$-, $(p-1)$-, and $(p-2)$-link conditions in $L_0$ which
cannot be guaranteed by requiring $ab$ to satisfy them in $K$.

In section~\ref{ssec-edgcontflagp}, by a graph theoretic approach,
we show that the $p$-link condition in $K$ alone
is sufficient to prevent the appearance of new relative torsions.
The above example illustrates that the $p$-link condition
is necessary for guaranteeing surjectivity in relative torsions. 

\paragraph{Injectivity.}
Unlike surjection, the injectivity implied by Theorem~\ref{aux-topo-thm}(ii)
cannot be used to claim that no relative homology class will
be killed by an edge $ab$ satisfying the $p$- and $(p-1)$-link conditions
in $K$. The reason is that for injectivity of $\gamma_{{ab}_*}$ 
we choose the pair $(L,L_0)$ in $K$ first and then consider its image
under $\gamma_{ab}$. So, even if $ab$ satisfies $p$- and 
$(p-1)$-link conditions in $K$, it may not do so in $L_0$ and
hence Theorem~\ref{aux-topo-thm}(ii) may not be applied. 
In fact, contracting edges satisfying all link conditions in $K$
can indeed kill a relative torsion. 
Figure~\ref{fig-contr_ex} illustrates
such an example.  The resulting complex after contracting edge $ab$ in
the $2$-complex on left is shown on right. The left complex minus the
triangles $abd$, $abm$ and the edge $ab$ forms a 15-triangle M\"obius
strip with a self-intersection at vertex $d$. This strip relative to
its boundary results in a relative torsion. On the right, the self
intersection expands to the {\em edge} $ad$, which causes the relative
torsion to disappear.
\begin{figure}[ht!]
\centering
\includegraphics[scale=0.8]{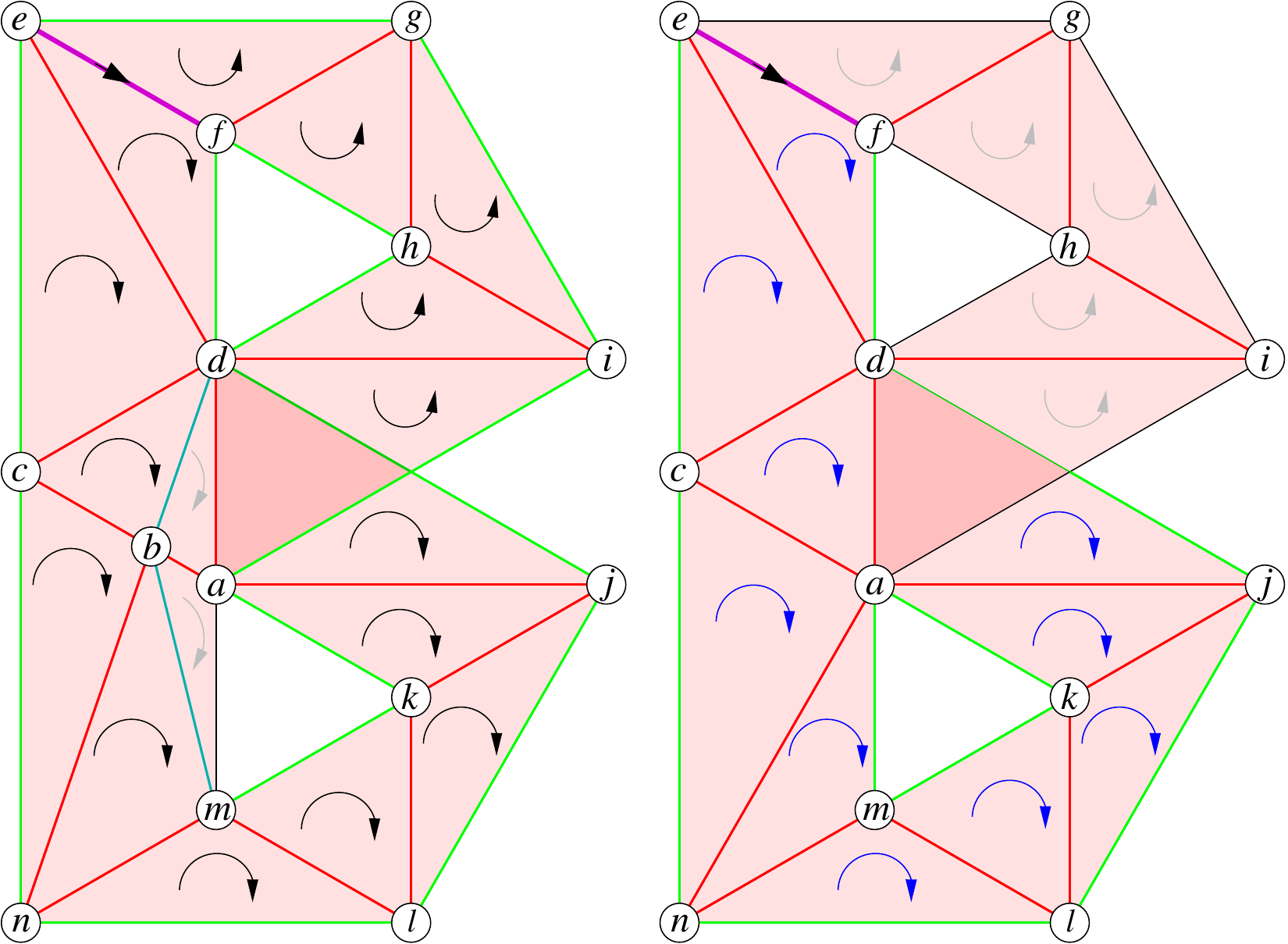}
\vspace{0.06in}
\caption{Contracting $ab$ destroys a relative homology class even
  though it satisfies the link condition. In the left complex, a
    M\"obius strip is formed by the triangles with orientations shown
    in black. This M\"obius strip does not exist in the right complex,
    since it self-intersects along edge $ad$ here. The input chain for
    the OHCP instance is edge $ef$ (shown in purple). Candidates for
    optimal homologous chains are shown in green and cyan on the left
    complex, and in green on the right complex. See Section
    \ref{ssec-exmpl} for more details.}
\label{fig-contr_ex}
\end{figure}

This example illustrates that an edge contraction may destroy a
relative torsion though no new relative torsion is generated thanks to
Theorem~\ref{main-topo-thm}(iii).  This is of course good news for OHCP
since such edge contractions can only take away obstructions to their
solutions and not introduce new ones.  We elaborate on this statement
now, and describe a concrete example illustrating the benefits of
such edge contractions to the efficient solution of OHCP instances.

\subsection{Relative homology preservation and the OHCP} 
\label{ssec-relhompresohcp}
Given an oriented simplicial complex $K$ of dimension $d$, and a
natural number $p$, $1 \le p \le d$, the \emph{p-boundary matrix} of
$K$, denoted $[\partial_{p}]$, is a matrix containing exactly one
column $j$ for each $p$-simplex $\sigma$ in $K$, and exactly one row
$i$ for each $(p-1)$-simplex $\tau$ in $K$.  If $\tau$ is not a face
of $\sigma$, then the entry in row $i$ and column $j$ is 0.  If $\tau$
is a face of $\sigma$, which we denote by $\tau \preceq \sigma$, then
this entry is $1$ if the orientation of $\tau$ agrees with the
orientation induced by $\sigma$ on $\tau$, and $-1$ otherwise.

Note that given a simplicial complex $K$ and a choice of orientations
for all simplices in $K$, $[\partial_{p}]$ is unique under row and
column permutations, and will generally be referenced as a single
matrix. We also need the notion of total unimodularity. A matrix $A$
is {\em totally unimodular}, or {\em TU}, if the determinant of each
square submatrix of $A$ is either $0$, $1$, or $-1$.  An immediately
necessary condition for $A$ to be TU is that each $a_{ij} \in \{0, \pm
1\}$.  The importance of TU matrices for integer programming is well
known -- see, for instance, the book by Schrijver \cite[Chapters
19-21]{Schrijver1986}. In particular, it is known that the {\em
  integer} linear program
\[
\min \; \vf^T \vx \quad
  \text{subject to} \quad A \vx = \vb, \; \vx \ge \vzero \text{
    and } \vx \in \Z^n
\]
for $A \in \Z^{m \times n}, \vb \in \Z^n$ can {\em always}, i.e., for
every $\vf \in \R^n$, be solved in polynomial time by solving its
linear programming {\em relaxation} (obtained by ignoring $\vx \in
\Z^n$) {\em if and only if $A$ is totally unimodular}.  The main
motivation for our discussion of totally unimodular matrices is the
following result in~\cite{DeHiKr2011}.
\begin{theorem}
\label{thm-tuiffnoreltor}
For a finite simplicial complex $K$ of dimension greater than $p$, the
boundary matrix $[\partial_{p+1}]$ is totally unimodular if and only
if $\homo_p\left(L,L_0\right)$ is torsion-free, for all pure
subcomplexes $L_0$, $L$ in $K$ of dimensions $p$ and $p+1$
respectively, where $L_0 \subset L$.
\end{theorem}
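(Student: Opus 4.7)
The plan is to prove both directions of the equivalence via the Smith Normal Form (SNF) of the relative boundary matrix, using the standard dictionary: total unimodularity of an integer matrix forces all its SNF invariant factors to equal $1$, while a square submatrix with $|\det|\ge 2$ yields torsion in the associated cokernel.

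For the forward direction, I would fix pure subcomplexes $L_0\subset L$ of dimensions $p$ and $p+1$, and observe that in the natural simplex bases, $\rltvbndry{p+1}{L}{L_0}$ is represented by the submatrix $M$ of $[\partial_{p+1}]$ obtained by selecting columns indexed by $(p+1)$-simplices of $L$ and rows indexed by $p$-simplices of $L\setminus L_0$. Since TU is inherited by submatrices, all subdeterminants of $M$ lie in $\{-1,0,1\}$; hence the $k$-th determinantal divisor of $M$ equals $1$ for every $k\le \operatorname{rank}(M)$, forcing every SNF invariant factor of $M$ to equal $1$. This makes $\image \rltvbndry{p+1}{L}{L_0}$ a direct summand of $\chains_p(L,L_0)$, so the cokernel is torsion-free, and the subgroup $\homo_p(L,L_0) = \cycles_p(L,L_0)/\image \rltvbndry{p+1}{L}{L_0}$ inherits torsion-freeness.

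For the backward direction I would argue contrapositively, building pure subcomplexes with relative torsion from any $k\times k$ submatrix $N$ of $[\partial_{p+1}]$ with $|\det N|\ge 2$. Let $\Sigma$ and $T$ be the sets of $(p+1)$- and $p$-simplices indexing the columns and rows of $N$; take $L$ to be the pure $(p+1)$-subcomplex generated by $\Sigma$ (the union of closures of the $\sigma\in\Sigma$ in $K$), and $L_0$ the pure $p$-subcomplex generated by those $p$-faces of simplices in $\Sigma$ that lie outside $T$. Purity of $L_0$ makes $\chains_p(L,L_0)$ free abelian on $T$ and $\chains_{p+1}(L,L_0)$ free on $\Sigma$, so the matrix of $\rltvbndry{p+1}{L}{L_0}$ in these bases is exactly $N$. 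Since $\det N\ne 0$, $N\Z^k$ has finite index in $\chains_p(L,L_0)\cong\Z^k$; combining this with $\rltvbndry{p}{L}{L_0}\circ\rltvbndry{p+1}{L}{L_0}=0$ forces $\rltvbndry{p}{L}{L_0}$ to vanish on this finite-index subgroup while landing in the torsion-free group $\chains_{p-1}(L,L_0)$, so it must be identically zero. Thus $\cycles_p(L,L_0)=\chains_p(L,L_0)$ and $\homo_p(L,L_0)\cong\Z^k/N\Z^k$; the invariant factors of this quotient multiply to $|\det N|\ge 2$, so at least one exceeds $1$, yielding the desired torsion.

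The main obstacle I anticipate is the backward construction: arranging $L$ and $L_0$ so that the relative chain groups are free abelian on precisely the index sets $\Sigma$ and $T$ and the relative boundary matrix coincides with $N$, then using torsion-freeness of the codomain of $\rltvbndry{p}{L}{L_0}$ together with $\boundary\boundary = 0$ to collapse $\cycles_p(L,L_0)$ onto $\chains_p(L,L_0)$, so that the torsion of $\homo_p(L,L_0)$ can be read off directly from the cokernel of $N$.
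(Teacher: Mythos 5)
The paper itself offers no proof of Theorem~\ref{thm-tuiffnoreltor}: it is imported verbatim from~\cite{DeHiKr2011}, and the argument given there is essentially the one you propose --- total unimodularity passes to the submatrix representing $\rltvbndry{p+1}{L}{L_0}$, forcing unit invariant factors and hence a torsion-free group $\chains_p(L,L_0)/\bdys_p(L,L_0)$ containing $\homo_p(L,L_0)$, while a square submatrix $N$ with $|\det N|\ge 2$ produces a pair $(L,L_0)$ whose relative boundary matrix is exactly $N$ and whose $p$-th relative homology is the finite nontrivial group $\Z^k/N\Z^k$. Your write-up is correct; the only points you leave implicit are that $\det N\neq 0$ forces every row-simplex of $N$ to be a face of some column-simplex (so the rows of $N$ really index $p$-simplices of $L$), and that $L_0$ may be empty in the degenerate case where the rows of $N$ exhaust all $p$-faces of the chosen $(p+1)$-simplices, a convention the source statement permits.
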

Following this result, instances of OHCP and related
problems (with homology over $\Z$) can be solved in polynomial time
when the simplicial complex $K$ is relative torsion-free. 
%

\subsection{An example where edge contraction helps to solve the OHCP} 
\label{ssec-exmpl} 
We present a small example which illustrates the effectiveness of edge
contractions on efficient solutions of OHCP.  Consider the simplicial
complexes $K$ on the left and $K'$ on the right in Figure
\ref{fig-contr_ex}. We obtain $K'$ from $K$ by contracting the edge
$ab$, which satisfies the link condition. 

We consider the following OHCP instance on $K$, and equivalently on
$K'$. All edges in red and the edge $ef$ in purple have weights of $1$
each. All edges in green and the thinner edges in black have weights
of $0.05$ each. The two edges $bd$ and $bm$ in $K$, drawn in cyan,
have weights of $0.10$ each. Orientations of all triangles and the
edge $ef$ are shown. Remaining edges could be oriented
arbitrarily. The input $1$-chain $\vc$ consists of the single edge
$ef$ with multiplier $1$.

The unique optimal solution to the OHCP LP is the chain $\vx$
consisting of the 15 edges shown in green and cyan in $K$, with
multipliers of $\pm0.5$, resulting in a total weight of $0.425$. $\vx$
is homologous to $\vc$ over $\R$, as their difference is the boundary
of the $2$-chain with multipliers of $-0.5$ for each of the $15$
triangles in the M\"obius strip, whose orientations are shown using
black arrows (triangles $abd$ and $abm$ have multipliers of $0$). But
the unique optimal homologous chain sought here is the $1$-chain
$\vx'$ shown in green in the right complex (this chain is identical in
$K$ and $K'$). Consisting of the 9 edges $ak, km, ma, ec, cn, nl, lj,
jd, df$ with multipliers $\pm1$ each, $\vx'$ has a total weight of
$0.45$. But to obtain this solution, we have to solve the OHCP model
as an {\em integer} program, instead of as a linear program.

Notice that $K'$ obtained from $K$ by contracting edge $ab$ is free of
relative torsion, and the weights of each edge in $K'$ is the same as
its weight in $K$. The unique optimal solution to the OHCP LP on $K'$
for the identical input chain is $\vx'$ itself. $\vx'$ is homologous
to $\vc$ as defined by the $2$-chain consisting of the triangles whose
orientations are shown using blue arrows, each with a multiplier of
$-1$. Hence we are able to solve the original OHCP instance using
linear rather than integer programming, after contracting one edge.

\bigskip
We show that
satisfying the $p$-link condition alone is sufficient to guarantee
that no new relative torsion is introduced.  Instead of proving the
result directly, we show that the $p$-link condition for the edge
being contracted preserves the total unimodularity of the boundary
matrix, which in turn guarantees the absence of relative torsion
thanks to Theorem~\ref{thm-tuiffnoreltor}.  In contrast to our earlier
approach, we use results from graph theory to arrive at this result.
 
\section{Bipartite Graphs and Boundary Matrices} 
\label{sec-bipgraphsbdymats}
Given a matrix $A$ with entries in \{0, $\pm$1\}, we associate with it
a weighted, undirected, bipartite graph $G=(V_1 \cup V_2, E)$ where
each row of $A$ corresponds to a node in $V_1$ and each column of $A$
corresponds to a node in $V_2$ \cite{Ca1965,CoRa1987}. Each nonzero
entry $a_{ij}$ in $A$ is associated with an edge connecting the nodes
in $V_1$ and $V_2$ corresponding to row $i$ and column $j$,
respectively, with a weight equal to $a_{ij}$. We call $G$ the
bipartite graph representation of $A$.

Some definitions from graph theory are central to our discussion.  For
a subgraph $S$ of $G$ containing a vertex $v$, we denote the number of
edges in $S$ incident to $v$ as the \emph{degree} of $v$ in $S$, or
$\deg_S(v)$.  A \emph{cycle} $Y$ is a connected subgraph of $G$ where
for each vertex $v \in Y, \ \deg_Y(v) = 2$.  We call a subgraph $C$ of
$G$ a {\em circuit} if for each vertex $v \in C, \deg_C(v) = 0 \bmod
2\,$. By this definition, it is possible that $v \in C$, but
$\deg_C(v) = 0$. However, we consider two circuits as equivalent if
they contain the same edge sets with the same weights for each edge.
This means, for example, that a circuit $C$ containing only vertices
of degree 0 in $C$ is equivalent to the empty subgraph.

\begin{definition}
  \label{def-bparity}
  A circuit $C$ in the weighted graph $G$ is {\em b-even} if the sum
  of the weights of the edges in $C$ is $0\bmod{4}$, and {\em b-odd}
  if the sum of the weights of the edges is $2\bmod{4}$. The quality
  of $C$ being b-even, b-odd, or neither is called the {\em b-parity}
  of $C$.
\end{definition}
This definition is equivalent to the definition of {\em even} and {\em
  odd} cycles given by Conforti and Rao \cite{CoRa1987}.  Given a
simplicial complex $K$, consider the $p$-graph $G_p(K)$ of $K$
constructed as follows. Each $p$- and $(p-1)$-simplex $\sigma\in K$
provides a dual vertex $\sigma^*$ in $G_p(K)$. We call the vertex
$\sigma^*$ a $p$- or $(p-1)$-vertex if $\sigma$ is a $p$- or
$(p-1)$-simplex, respectively.  There is an edge $\sigma^*\tau^*$ in
$G_p(K)$ if and only if the $(p-1)$-simplex $\sigma\in K$ has the
  $p$-simplex $\tau$ as a coface. The weight of $\sigma^*\tau^*$ is
$1$ or $-1$ depending on whether $\sigma$ and $\tau$ match
in orientation or not, respectively.  It is evident that
$G_p(K)$ is a weighted bipartite graph whose adjacency matrix is given
by $[\partial_p]$. This means the following proposition is almost
immediate.

\begin{proposition}
  \label{lem-revorntpresbparity}
  Reversing the orientation of any collection of $p$- and
  $(p-1)$-simplices in $K$ does not alter the b-parity of any circuit
  in $G_p(K)$.
\end{proposition}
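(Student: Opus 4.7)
The plan is to reduce the problem to reversing the orientation of a single simplex and then to track the change in edge weights inside a fixed circuit. Since the weight of each edge $\sigma^*\tau^*$ in $G_p(K)$ depends only on the orientations of $\sigma$ and $\tau$, reversing a whole collection of orientations can be carried out one simplex at a time without interference. Hence it suffices to show that reversing the orientation of a single simplex $\sigma$ (either a $p$- or a $(p-1)$-simplex) preserves the b-parity of every circuit $C$ in $G_p(K)$.

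The next step is to identify exactly which edges of $G_p(K)$ have their weights altered. From the construction of $G_p(K)$, these are precisely the edges incident to the dual vertex $\sigma^*$, and each such weight is negated while every other weight in the graph is unchanged. To measure the resulting effect on $C$, I would set $m = \deg_C(\sigma^*)$ and split the $m$ edges of $C$ incident to $\sigma^*$ into $n_+$ edges of weight $+1$ and $n_-$ edges of weight $-1$, so that $n_+ + n_- = m$. Negating each such weight changes its contribution to the edge-weight sum of $C$ by $-2$ in the first case and $+2$ in the second, so the total change in the sum equals $2(n_- - n_+)$.

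The final step uses the defining property of a circuit: every vertex has even degree in $C$, and in particular $m = n_+ + n_-$ is even. Since $n_+ + n_-$ and $n_- - n_+$ always have the same parity, $n_- - n_+$ is even as well, and thus $2(n_- - n_+)$ is divisible by $4$. Consequently the sum of edge weights in $C$ is preserved modulo $4$, so by Definition~\ref{def-bparity} the b-parity of $C$ is unchanged. I do not anticipate a serious obstacle here: the argument is just a direct parity count that leverages the evenness of vertex degrees built into the definition of a circuit, together with the observation that each edge weight is $\pm 1$ so the per-edge change is always $\pm 2$.
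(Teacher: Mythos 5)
Your argument is correct: reducing to a single orientation reversal, observing that exactly the edges at $\sigma^*$ are negated, and using the evenness of $\deg_C(\sigma^*)$ to conclude the weight sum changes by a multiple of $4$ is precisely the local parity count the paper has in mind when it calls the proposition ``almost immediate'' and omits the proof. No gaps; you have simply written out the details the paper leaves to the reader.
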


Let $C$ be a circuit in a graph $G$.  A {\em chord} of $C$ is a single
edge of $G$ not in $C$ whose both end points are vertices in $C$. If
$C$ has no chord, it is called {\em chordless}, and to say $C$ is
induced is equivalent to saying $C$ is chordless. Using the
terminology given above, we now state without proof an important
result presented by Conforti and Rao \cite{CoRa1987}, who were in turn
motivated by the results of Camion \cite{Ca1965}.

\begin{theorem}
  \label{thm-tuiffnooddindcirct}
  For a matrix $A$ with entries in \{0, $\pm$1\} and its bipartite
  graph representation $G$, $A$ is totally unimodular if and only if
  $G$ contains no chordless b-odd circuit.
\end{theorem}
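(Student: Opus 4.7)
The plan is to reduce the theorem to Camion's classical characterization~\cite{Ca1965} of $\{0,\pm 1\}$ totally unimodular matrices: $A$ is TU if and only if every square submatrix $A'$ whose row- and column-sums are all even (an \emph{eulerian} submatrix) has its entries summing to $0\pmod 4$. The first step is to translate this criterion into the bipartite graph $G$. Since $G$ is bipartite with row-vertices on one side and column-vertices on the other, the nonzero entries of an eulerian square submatrix of $A$ correspond exactly to the edges of a subgraph of $G$ in which every vertex has even degree, i.e., a circuit $C$, and the entry-sum of the submatrix equals $w(C)$. Hence Camion's theorem reads: $A$ is TU if and only if $G$ contains no b-odd circuit at all.

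Given this translation, the remaining content of Theorem~\ref{thm-tuiffnooddindcirct} is the equivalence ``$G$ has a b-odd circuit if and only if $G$ has a \emph{chordless} b-odd circuit''. One direction is immediate, since chordless b-odd circuits are in particular b-odd circuits. For the other direction, the natural first attempt is to take a b-odd circuit $C$ with fewest edges and argue it must be chordless. If $e=uv$ were a chord, it would split $C$ along its two $u$--$v$ subpaths $P_1, P_2$ into two strictly smaller circuits $C_i = P_i\cup\{e\}$, and the basic identity $w(C_1) + w(C_2) = w(C) + 2w(e) \equiv 0\pmod 4$ would show that $C_1$ and $C_2$ share the same b-parity.

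The main obstacle is precisely this last step: minimality forces both $C_i$ to be b-even, but no contradiction follows, because the extra summand $2w(e)\equiv 2\pmod 4$ lets two b-even circuits ``cancel'' into a b-odd circuit $C$. A correct reduction therefore requires a more delicate combinatorial move -- for example, a symmetric-difference argument along a shorter $u$--$v$ path through $G$ whose interaction with $C_1$ and $C_2$ flips parity, or a two-parameter minimality that jointly controls circuit length and chord count. This is the crux of the Conforti--Rao proof~\cite{CoRa1987}, and it is the one place where I would not try to reinvent the argument. In writing the paper I would establish the translation between eulerian submatrices and circuits carefully, verify the forward direction by noting that a chordless b-odd circuit $C$ directly exhibits an eulerian submatrix whose entry-sum is $w(C)\equiv 2\pmod 4$, and then invoke the Conforti--Rao chordless reduction to conclude -- which is precisely how Theorem~\ref{thm-tuiffnooddindcirct} is cited in the paper.
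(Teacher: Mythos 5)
The paper does not prove this theorem at all---it states it ``without proof'' as a known result of Conforti and Rao, building on Camion---and your proposal, after correctly setting up the translation between eulerian submatrices and b-odd circuits and correctly diagnosing why a naive minimal-counterexample argument cannot handle the chordless reduction, ultimately defers that key step to the same citation. So your treatment is essentially the same as the paper's (cite the result), with some additional and accurate scaffolding.
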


\noindent By combining Theorem~\ref{thm-tuiffnoreltor} and
Theorem~\ref{thm-tuiffnooddindcirct}, the following corollary is
immediate.
\begin{corollary}
  \label{cor-noreltoriffnooddincircuit}
  For a finite simplicial complex $K$ of dimension greater than $p$,
  the following results are equivalent.
  \begin{enumerate}
  \item $\homo_p\left(L,L_0\right)$ is torsion-free for all pure
    subcomplexes $L_0$, $L$ in $K$ of dimensions $p$ and $p+1$,
    respectively, where $L_0 \subset L$.
  \item The boundary matrix $[\partial_{p+1}]$ is totally unimodular.
  \item The bipartite graph representation $G_{p+1}(K)$ of
    $[\partial_{p+1}]$ contains no chordless b-odd circuit.
  \end{enumerate}
\end{corollary}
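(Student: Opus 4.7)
The plan is to observe that the corollary is a direct chaining of the two main theorems already stated, so the proof reduces to verifying that the hypotheses of both theorems apply to the same object and then combining the resulting equivalences.

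First I would note that $[\partial_{p+1}]$ is a matrix with entries in $\{0, \pm 1\}$, since by definition each entry is $0$, $1$, or $-1$ according to face incidence and orientation agreement. Next I would verify that $G_{p+1}(K)$, as constructed in Section~\ref{sec-bipgraphsbdymats} from $p$- and $(p+1)$-simplices (with the shift of indices from the construction of $G_p(K)$ applied to $p$ replaced by $p+1$), is precisely the weighted bipartite graph representation associated to $[\partial_{p+1}]$ in the sense used by Theorem~\ref{thm-tuiffnooddindcirct}: rows of $[\partial_{p+1}]$ correspond to $p$-simplices and columns to $(p+1)$-simplices, giving the two vertex classes of $G_{p+1}(K)$, and the edges with weights $\pm 1$ match the nonzero entries. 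This is essentially the content of the paragraph preceding Proposition~\ref{lem-revorntpresbparity}.

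With these identifications in place, the equivalence (1) $\Leftrightarrow$ (2) is exactly Theorem~\ref{thm-tuiffnoreltor} applied to $[\partial_{p+1}]$ with the stated hypothesis $\dim(K) > p$. The equivalence (2) $\Leftrightarrow$ (3) is Theorem~\ref{thm-tuiffnooddindcirct} applied to $A = [\partial_{p+1}]$ and $G = G_{p+1}(K)$. Chaining these yields (1) $\Leftrightarrow$ (2) $\Leftrightarrow$ (3), which completes the argument.

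There is no real obstacle here, since both component theorems have already been stated. The only small point worth double-checking is the dimension bookkeeping: Theorem~\ref{thm-tuiffnoreltor} speaks of $[\partial_{p+1}]$ and relative homology in dimension $p$ with subcomplexes of dimensions $p$ and $p+1$, which matches precisely the indexing used in the corollary, so no reindexing is required. Therefore the proof of the corollary is a short two-step citation of the prior theorems.
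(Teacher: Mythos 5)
Your proof is correct and matches the paper's approach exactly: the paper also obtains the corollary immediately by chaining Theorem~\ref{thm-tuiffnoreltor} for (1)$\Leftrightarrow$(2) with Theorem~\ref{thm-tuiffnooddindcirct} for (2)$\Leftrightarrow$(3). Your extra check that $G_{p+1}(K)$ is the bipartite graph representation of $[\partial_{p+1}]$ is the same identification the paper makes just before Proposition~\ref{lem-revorntpresbparity}.
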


\subsection{Simplices and edge contraction} \label{ssec-simpedgcont}

Recall that each simplex in $K$ is mapped by an edge contraction
$\gamma_{ab}: K\rightarrow K'$ to a simplex in $K'$.  We categorize
simplices into three cases based on how they get mapped by
$\gamma_{ab}$. These cases are defined relative to the specific edge
$ab$ being contracted. We illustrate these cases in
Figure~\ref{fig-contr_case_exmpl}, and introduce several related
definitions below.
\begin{enumerate}

\item \label{Mirr_Case} 
  For each pair of simplices $\sigma, \sigma' \in K$ where $a \in
  \sigma, b\in \sigma'$, and $\sigma = ( \sigma' \setminus \{b\}) \cup
  \{a\}$, we have $\gamma_{ab}(\sigma) = \gamma_{ab}(\sigma') =
  \sigma$.  Then $\sigma$ and $\sigma'$ are \emph{mirror} simplices,
  and we say they are mirrors of each other.  Their duals are mirror
  vertices, and we say these vertices are mirrors of each other.

  In Figure~\ref{fig-contr_case_exmpl}, $a$ is the mirror of $b$, $ad$
  is the mirror of $bd$, $ae$ is the mirror of $be$, and $ade$ is the
  mirror of $bde$. Similarly, dual vertex $u_1$ is the mirror of
  $u_2$, $v_1$ is the mirror of $v_2$, and $w_1$ is the mirror of
  $w_2$.

\item \label{Coll_Case} 
  $\sigma \in K$ is \emph{collapsing} if $a, b \in \sigma$.  Its dual
  $\sigma^*$ is a \emph{collapsing} vertex.  Note if $\sigma$ is
  $p$-dimensional, then $\sigma$ has exactly one pair of ($p-1$)-faces
  that are mirrors of each other.  Then $\gamma_{ab}(\sigma) = \tau$,
  where $\tau$ is the unique mirror ($p-1$)-face of $\sigma$
  containing $a$.

  In Figure~\ref{fig-contr_case_exmpl}, $ab$, $abd$, $abe$, and $abde$
  are all collapsing simplices.  $v_3$, $w_3$, and $s$ are collapsing
  vertices.

\item \label{Inj_Case} 
  $\sigma \in K$ is \emph{injective} if neither of the above cases
  applies. We have $\gamma_{ab}(\sigma) = \sigma'$, and
  $\gamma_{ab}^{-1}(\sigma') = \sigma$.  If $b \notin \sigma$, then
  $\sigma' = \sigma$. If $b \in \sigma$, then $\sigma' = (\sigma
  \setminus \{b\}) \cup \{a\}$.
\end{enumerate}
\begin{figure}[ht!]
\centering
\includegraphics[scale=0.80]{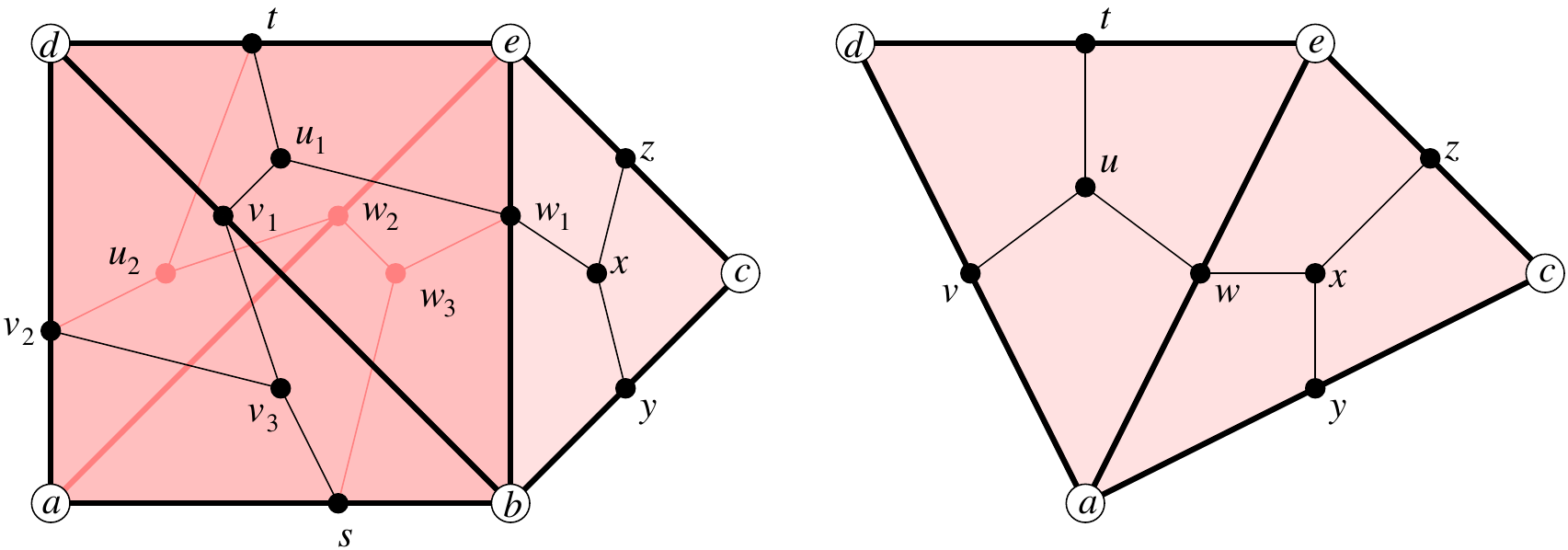}
\caption{The 2-graph of a complex, and the result after edge
  contraction $\gamma_{ab}$.}
\label{fig-contr_case_exmpl}
\end{figure}

\noindent We state a few more definitions related to the Mirror and
Collapsing cases.

\begin{definition}
  \label{def-Mirr_Edge}
  Let $\sigma_1 \in K$ be a ($p+1$)-simplex with mirror $\sigma_2 \in
  K$.  Note that $\sigma_1$ and $\sigma_2$ have exactly one common
  $p$-face $\xi$ that is an injective simplex.  Each other $p$-face
  $\tau_1 \preceq \sigma_1$ is a mirror of some $\tau_2 \preceq
  \sigma_2$, and vice-versa.  In any subgraph $S$ of the dual graph
  $G_{p+1}$ of $K$ that contains both edges $\sigma_1^*\tau_1^*$ and
  $\sigma_2^*\tau_2^*$, these two edges are \emph{mirror edges}, and
  are mirrors of each other.
  \noindent In Figure~\ref{fig-contr_case_exmpl}, 
   assuming $S$ is the entire $2$-graph, $tu_1$ is the mirror
  of $tu_2$, $u_1v_1$ is the mirror of $u_2v_2$, and $u_1w_1$ is the
  mirror of $u_2w_2$.
\end{definition}

\begin{definition}
  \label{def-mirrcon}
  For any $p$-simplices $\tau_1$ and $\tau_2$ that are mirrors of each
  other in $K$, the two unique edges of the dual graph $G_{p+1}$ of
  $K$ that directly connect $\tau_1$ and $\tau_2$ each to their common
  collapsing $(p+1)$-coface $\sigma$ are together called the
  \emph{mirror connection} between $\tau_1$ and $\tau_2$, and also
  between $\tau_1^*$ and $\tau_2^*$.
  \noindent There are two mirror connections in
  Figure~\ref{fig-contr_case_exmpl}.  The mirror connection between
  $v_1$ and $v_2$ are the two edges $v_3v_1$ and $v_3v_2$, and the
  mirror connection between $w_1$ and $w_2$ are $w_3w_1$ and $w_3w_2$.
\end{definition}

\begin{definition}
  \label{def-Coll_edge}
  Any edges incident to two vertices in the dual graph $G_{p+1}$ that
  are both collapsing are called \emph{collapsing edges}.
  \noindent In Figure~\ref{fig-contr_case_exmpl}, $sv_3$ and $sw_3$
  are collapsing edges.
\end{definition}

\section{Relative Torsion-Aware Edge Contractions}
\label{ssec-edgcontflagp}

We now present the theorem which states that contracting an edge in a
simplicial complex satisfying the $p$-link condition does not create
any new relative torsion in dimension $p$. In other words, if the
$(p+1)$-boundary matrix of a simplicial complex is totally unimodular
to start with, such an edge contraction will preserve this property.
Theorem~\ref{thm-circsurjfunc} below states the same thanks to
Corollary~\ref{cor-noreltoriffnooddincircuit}.

\begin{theorem}
  \label{thm-circsurjfunc}
  Let $K'=\gamma_{ab}(K)$ where $ab$ satisfies the $p$-link condition.
  Let $\crt$ be the set of circuits in graph $G=G_{p+1}(K)$, and
  $\crt'$ be the set of circuits in $G'=G_{p+1}(K')$. Let
  $\crtL\subseteq\crt$ be the set of all circuits of $G$ that contain
  a collapsing $p$-vertex.  Let $\crtM\subseteq\crt$ be the set of all
  circuits of $G$ that contain a pair of $(p+1)$-vertices that are
  mirrors of each other.  There exists a function $f: \crt \setminus
  (\crtM \cup \crtL)\rightarrow \crt'$ with the following properties.
  \begin{enumerate}
    \renewcommand{\theenumi}{{\bf P\arabic{enumi}}}
    \renewcommand{\labelenumi}{{\bf P\arabic{enumi}.}}
  \item \label{thm-circsurjfunc_surj} $f$ is surjective.
  \item \label{thm-circsurjfunc_bpar} $f$ preserves b-parity.
  \item \label{thm-circsurjfunc_chord} For each $C$ in the domain of
    $f$, if $C$ has a chord then so does $f(C)$.
  \end{enumerate}
\end{theorem}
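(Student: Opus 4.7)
The plan is to define $f$ by projecting each allowed circuit down to $G'$ via $\gamma_{ab}$ after first deleting the edges that become degenerate under the projection. The key structural observation is that the hypothesis $C\notin\crtL$ forces every collapsing $(p+1)$-vertex $(ab\xi)^{*}$ present in $C$ to have degree exactly $2$, namely the two mirror-connection edges to $(a\xi)^{*}$ and $(b\xi)^{*}$: the remaining $p$-faces of $ab\xi$ all have the form $ab\xi'$ with $\xi'\subsetneq\xi$, are collapsing, and are therefore excluded, so the circuit condition forces the degree to be the even positive number $2$. Let $M\subseteq C$ denote the union of these mirror-connection edges, and write $\gamma_{ab}^{*}$ for the map on graph edges induced by $\gamma_{ab}$ on the dual simplices. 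I set $f(C):=\gamma_{ab}^{*}(C\setminus M)$. Because $C\notin\crtM$, two distinct edges of $C\setminus M$ cannot share an image (such a collision would require a mirror pair of $(p+1)$-vertices to coexist in $C$), so $\gamma_{ab}^{*}$ is injective on $C\setminus M$. To verify $f(C)\in\crt'$ I check parity vertex by vertex: injective $p$- and $(p+1)$-vertices retain their $C$-degrees; the unique member of any mirror $(p+1)$-pair that lies in $C$ retains its degree; collapsing $(p+1)$-vertices vanish; and for a mirror $p$-pair $(a\mu)^{*},(b\mu)^{*}$ that merges in $G'$, the combined degree is $\deg_{C}((a\mu)^{*})+\deg_{C}((b\mu)^{*})$ minus $0$ or $2$ according to whether the mirror connection at $(ab\mu)^{*}$ lies in $C$, and hence remains even.

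For \textbf{P1} (surjectivity), given $C'\in\crt'$ I build a preimage $\tilde C$ by edge-wise lifting. For each edge $(\tau'^{*},\sigma'^{*})$ of $C'$, I lift to an edge $(\tau^{*},\sigma^{*})$ of $G$ by taking the ``$a$-side'' $\tau=\tau'$, $\sigma=\sigma'$ whenever $\sigma'\in K$ (in which case $\tau'\in K$ follows by closure), and otherwise the ``$b$-side'' $\tau=\tau_{b}$, $\sigma=\sigma_{b}$; since $K'=\gamma_{ab}(K)$, at least one of the two choices is always valid. Call the lifted subgraph $\tilde C_{0}$. Because every merged vertex $(a\mu)^{*}\in C'$ has even degree equal to the sum of the lifted degrees at $(a\mu)^{*}$ and $(b\mu)^{*}$ in $\tilde C_{0}$, these two lifted degrees share a common parity; whenever both are odd I add the two edges of the mirror connection at $(ab\mu)^{*}$ to $\tilde C_{0}$. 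This is exactly where the $p$-link hypothesis is essential, since it guarantees that the required $(p+1)$-simplex $ab\mu$ exists in $K$. The resulting $\tilde C$ has only even degrees, contains no collapsing $p$-vertex (none is ever a lift target), and contains no mirror $(p+1)$-pair (each $(p+1)$-vertex of $C'$ yields a single canonical preimage), so $\tilde C\in\crt\setminus(\crtM\cup\crtL)$; and by construction $M$ inside $\tilde C$ equals the set of added edges, giving $f(\tilde C)=\gamma_{ab}^{*}(\tilde C_{0})=C'$.

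For \textbf{P2} (b-parity), Proposition~\ref{lem-revorntpresbparity} lets me reorient freely, so I orient $K'$ arbitrarily, orient each non-collapsing simplex of $K$ so that its vertex order matches its image in $K'$ (writing the two members of any mirror pair $a\eta,b\eta$ with the same ordering of the remaining vertices), and orient every collapsing $(p+1)$-simplex in the order $ab\xi$. Under these choices every non-mirror-connection edge of $G$ receives the same $\pm1$ weight as its image in $G'$, while the two mirror-connection edges at $(ab\xi)^{*}$ carry weights $+1$ and $-1$, the induced signs of $b\xi$ and $a\xi$ in $\partial(ab\xi)$ (where $a$ sits at position $0$ and $b$ at position $1$). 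Their contribution to the total is zero, so $f$ preserves the total edge weight exactly. For \textbf{P3} (chord preservation), a chord $e$ of $C$ cannot have a collapsing endpoint: otherwise the first-paragraph analysis would force $e$ itself to be one of the two edges of a mirror connection already in $C$. Thus $\gamma_{ab}^{*}(e)$ is a legitimate edge of $G'$; both endpoints lie in $f(C)$ because the degree count above shows that every non-collapsing vertex of $C$ projects to a vertex of positive degree in $f(C)$; and $\gamma_{ab}^{*}(e)\notin f(C)$ by the same ``no collision'' argument, since any competing edge of $C\setminus M$ mapping to $\gamma_{ab}^{*}(e)$ would pass through the mirror partner of $e$'s $(p+1)$-endpoint, again putting a mirror $(p+1)$-pair into $C$.

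The part I expect to demand the most care is the surjectivity construction: one must verify that the split of each merged vertex's degree between the two sides of the mirror fold has matching parity, and that adding the mirror connection never accidentally introduces a mirror $(p+1)$-pair or a collapsing $p$-vertex into $\tilde C$. This parity-matching across the fold is precisely what the $p$-link hypothesis underwrites, and it is the reason a strictly weaker assumption would not suffice.
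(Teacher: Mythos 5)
Your proposal is correct, and its architecture is the same as the paper's: you define $f$ by pushing a circuit forward under $\gamma_{ab}$ after discarding the mirror connections, you verify evenness of degrees by the same bookkeeping (a collapsing $(p+1)$-vertex in a circuit avoiding $\crtL$ has degree exactly $2$, realized by its mirror connection; merged mirror $p$-vertices get the sum of the two old degrees minus $0$ or $2$), and you invoke the $p$-link condition in exactly the same place, namely to guarantee that the collapsing simplex $ab\mu$, and hence the mirror connection, exists whenever both $a\mu$ and $b\mu$ occur. Where you genuinely diverge is in the execution of \ref{thm-circsurjfunc_surj} and \ref{thm-circsurjfunc_bpar}. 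For surjectivity the paper extends $f$ to a map $g$ on all subgraphs, picks an \emph{arbitrary} element of $g^{-1}(C')$, and repairs it with a four-step procedure (delete collapsing edges, resolve mirror edges, negate mirror connections, negate single edges); you instead build a canonical edge-wise lift (``$a$-side if the coface survives in $K$, else $b$-side''), after which only the single parity-repair step of inserting mirror connections is needed. This is tighter, and your observation that a degenerate $(p+2)$-dimensional preimage cannot be the \emph{only} preimage (by closure under faces) correctly discharges the one case where the lift could fail. For b-parity the paper runs a case analysis on whether the mirror connection lies in $C$ and whether the two mirror $p$-simplices are consistently oriented; you instead normalize all orientations once, justified by Proposition~\ref{lem-revorntpresbparity}, so that every surviving edge keeps its weight and the two mirror-connection weights are $+1$ and $-1$ and cancel exactly. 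Both routes prove the same statements; yours trades the paper's case analyses for a one-time canonical choice (of lift, and of orientation), which makes the parity arguments shorter at the cost of having to verify that the canonical choices are globally consistent — a verification you do carry out.
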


\smallskip
\begin{proof}
  Before providing details of the proof, we refer the reader to
  Figure~\ref{fig-contr_case_exmpl} to visualize $\crtM$ and $\crtL$
  for $p=1$.  $\crtL$ is the set of all circuits that contain edges
  incident to the vertex $s$, which is the dual of the only collapsing
  $p$-simplex in this complex.  One such circuit is $\{sv_3,v_3v_1,
  v_1u_1, u_1w_1, w_1w_3, w_3s\}$.  $\crtM$ is the set of circuits
  which contain an edge incident to $u_1$, and also an edge incident
  to $u_2$.  One such circuit is $\{u_1t, tu_2, u_2v_2, v_2v_3,
  v_3v_1, v_1u_1\}$. Though there are several circuits in this graph,
  none are in the domain of $f$ as all of them belong to either
  $\crtM$ or $\crtL$.

  We break the proof up into several parts. We first define the
  function $f$ in Section \ref{sssec-deff}, and then show that $f$ is
  surjective (Property \ref{thm-circsurjfunc_surj}) in Section
  \ref{sssec-surjf}. Finally, we prove that $f$ preserves $b$-parity
  (Property \ref{thm-circsurjfunc_bpar}) in Section \ref{sssec-bparf}
  and that $f$ preserves chords of circuits (Property
  \ref{thm-circsurjfunc_chord}) in Section \ref{sssec-chordf}.

  \subsection{Definition of $f$} \label{sssec-deff}

  We define the function $f$ in the following way. For any $C \in
  \Dom(f)$, let $L=\{\sigma\, |\, \sigma^*\in C\}$.  For each vertex
  $\sigma^*$ of $C$, $f(\sigma^*)= \gamma_{ab}(\sigma)^*$.  The
  function $f$ maps each edge $\sigma^*\tau^*\in C$ to the edge
  $\gamma_{ab}(\sigma)^*\gamma_{ab}(\tau)^*$ if the edge is not a
  mirror connection, otherwise simply to the vertex
  $\gamma_{ab}(\sigma)^*=\gamma_{ab}(\tau)^*$.  Since $\gamma_{ab}$ is
  well defined, for any $C \in \Dom(f)$, if $f(C) = X$ and $f(C) = Y$,
  then $X = Y$.

  To show $f(C) \in \crt'$, first note that if all simplices of $L$
  are injective, this is almost trivially so since in these cases,
  $\gamma_{ab}$ is injective, preserves the dimension of simplices,
  and $\tau \in L$ is a face of $\sigma \in L$ if and only if
  $\gamma_{ab}(\tau)$ is a face of $\gamma_{ab}(\sigma)$.

  If $L$ does not contain any collapsing simplices, but contains a
  pair of mirror simplices $\tau_1$ and $\tau_2$, then since $C \notin
  \crtM$, $\tau_1$ and $\tau_2$ are $p$-simplices.  Furthermore, the
  proper cofaces of either $\tau_1$ or $\tau_2$ are injective because
  the only coface which cannot be such is a collapsing simplex.  Let
  $\tau = \gamma_{ab}(\tau_1) = \gamma_{ab}(\tau_2)$.  Since
  $\gamma_{ab}$ maps the proper cofaces of $\tau_1$ and $\tau_2$
  injectively, $\deg_{f(C)}(\tau^*) = \deg_{C}(\tau_1^*)
  +\deg_{C}(\tau_{2}^*)$.  Since both operands of this sum are even,
  the sum is even.  Therefore, $f(C) \in \crt'$.

  Now assume that $L$ contains a collapsing simplex $\sigma$.  Since
  $C \notin \crtL$, $\sigma$ is a $(p+1)$-simplex.  Furthermore, since
  all but two $p$-faces $\tau_1$ and $\tau_2$ of $\sigma$ are also
  collapsing, $\sigma^*$ must be of degree 2 in $C$.  Therefore, $C$
  must contain the mirror connections $\tau_1^*\sigma^*$ and
  $\tau_2^*\sigma^*$.  Note that $\sigma$ is the only $(p+1)$-simplex
  that is a coface of both $\tau_1$ and $\tau_2$.  The edge
  contraction $\gamma_{ab}$ maps other proper cofaces of $\tau_1$ and
  $\tau_2$ in $L$ injectively.  Both mirror connections
  $\tau_1^*\sigma^*$ and $\tau_2^*\sigma^*$ are mapped by $f$ to the
  vertex $\tau^*$ where $\tau = \gamma_{ab}(\tau_1) =
  \gamma_{ab}(\tau_2) = \gamma_{ab}(\sigma)$.  Therefore,
  $\deg_{f(C)}(\tau^*) = \deg_C(\tau_1^*) + \deg_C(\tau_2^*) - 2$.
  Again, since $\deg_C(\tau_1^*) + \deg_C(\tau_2^*)$ is even,
  $\deg_{f(C)}(\tau^*)$ is even.  Hence, $f(C) \in \crt'$.

  \subsubsection{Surjectivity of $f$} \label{sssec-surjf}

  We have shown that $f$ maps each element in its domain to an element
  in its codomain.  To show that $f$ is surjective (Property
  \ref{thm-circsurjfunc_surj}), we define a function that extends the
  domain of $f$ to all subgraphs of $G$.  Let $\subG$ be the set of
  all subgraphs of $G$, and $\subG'$ be the set of all subgraphs of
  $G'$.  Define the function $g : \subG \to \subG'$ as the extension
  of $f$ to this domain.  Hence it is possible that some vertex
  $\tau^* \in \Dom(g)$ is the dual of a collapsing $p$-simplex $\tau$.
  In this case we specify that $g$ maps $\tau^*$ and all edges of $G$
  incident to $\tau^*$, which all must be collapsing edges, to the
  empty subgraph.  Since $\gamma_{ab}$ is surjective, so is $g$.  To
  show $f$ is surjective, we will take an arbitrary $C' \in \crt'$ and
  show that its preimage under $g$, which we will denote $\subG_C$,
  must contain an element of $\Dom(f)$.

  If $\subG_C$ contains an element $S \notin \Dom(f)$, there are three
  type of edges that we may remove or add to $S$ without affecting
  $g(S)$, as detailed below.
  \begin{itemize}
    \renewcommand{\labelitemi}{$-$}
    \setlength{\parskip}{1pt}
  \item We may remove or add any collapsing edge.
  \item We may remove or add any edge that is a mirror connection in
    $G$.
  \item We may remove either one, but not both, of any two edges that
    are mirrors of each other.  For any edge in $S$ that is a mirror
    in $G$ but not in $S$, we may add its mirror to $S$.
  \end{itemize}

  \noindent Using these modifications, we describe a procedure to
  construct a $C_f \in \Dom(f) \cap \subG_C$ (see Figure
  \ref{fig-ProcConstCf}). Figures~\ref{fig-SandS1},~\ref{fig-S2andS3},
  and~\ref{fig-S4andS5} illustrate the steps of this procedure.  Graph
  vertices are labeled with the vertex labels of the dual simplices.
  In each figure, graph edges and vertices eliminated by the
  step shown are highlighted in red, and edges and vertices
  added by the step are highlighted in green.

  \begin{figure}[ht!]
    \label{proc-constCf}
    \framebox[6.5in]{
      \parbox{6.4in}{
        {\sc Procedure Construct $C_f$} \\
        \begin{tabular}{ll}
          {\tt Input:} & $S \in \subG_C \setminus \Dom(f)$ in graph $G$ such that $g(S) \in \crt'$.\\
          {\tt Output:} & $C_f \in \subG_C \cap \Dom(f)$ such that
          $g(C_f) = g(S)$.  \end{tabular}
        \begin{enumerate}
          \renewcommand{\theenumi}{\Roman{enumi}}
          \renewcommand{\labelenumi}{\Roman{enumi}.}
          \setlength{\parskip}{-1pt}
        \item \label{MkCf-remcoled} Remove any collapsing edges.
        \item \label{MkCf-addaSdMir} 
	For any edge $\sigma^* \tau^* \in S$ that is a mirror in $G$, 
        where $\sigma$ adjoins the
             contracting vertex $b$, replace it with its mirror in $G$ if
             this mirror is currently absent in $S$, or remove 
		$\sigma^* \tau^*$ otherwise.

        \item \label{MkCf-negMir} For each pair of vertices of odd
          degree that are mirrors of each other, negate the mirror
          connections connecting them, i.e., remove them if they are
          in the current subgraph, or add them if they are not.
        \item \label{MkCf-negSnged} For each pair of vertices of odd
          degree incident to a common single edge in $G$, negate this
          edge.
        \end{enumerate}
      }
    }
    \vspace*{-0.14in}
    \caption{ \label{fig-ProcConstCf} Procedure to construct a circuit
      in $\Dom(f)$ starting with a subgraph not in $\Dom(f)$.}
  \end{figure}
  \begin{figure}[hb!]
    \centering
    \includegraphics[scale=0.75]{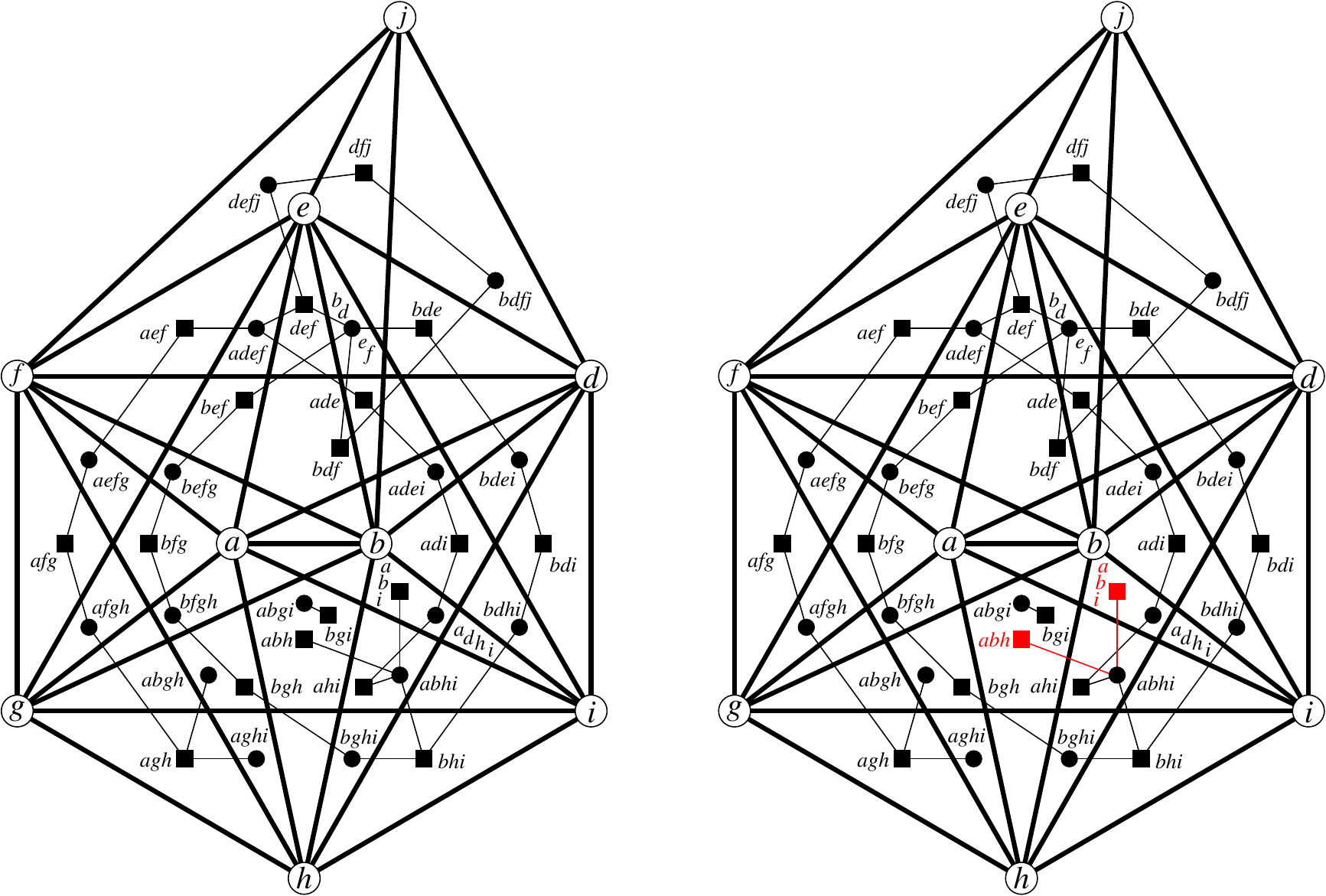}
    \caption{A subgraph $S$ of the 3-graph where $g(S) = C' \in \crt'$
      (left) and changes in Step I (right).}
    \label{fig-SandS1}
  \end{figure}

  We provide some details of the steps in the procedure now.  Denote
  the subgraph after Step \ref{MkCf-addaSdMir} as $S_2$.  Since we
  have removed any collapsing edges, the only edges remaining in $S_2$
  that $g$ does not map injectively are mirror connections in $G$.

  Since $C'$ is a circuit, the only vertices that are of odd degree in
  $S_2$ are vertices incident to a mirror connection in $G$.

  For any pair of mirror connections connecting two mirror vertices in
  $G$, there are three vertices involved -- two $p$-vertices
  $\tau_1^*$ and $\tau_2^*$ that are mirrors of each other in $G$, and
  a collapsing $(p+1)$-vertex $\sigma^*$.  $\deg_{S_2}(\sigma^*)$ is
  the number of edges of the mirror connection in $S_2$, and these are
  the edges $g$ removes by mapping them to the vertex $v = g(\tau_1^*)
  = g(\tau_2^*) = g(\sigma^*)$.  Because $C'$ is a circuit, and edges
  that are not a mirror connection are mapped injectively by $g$,
  $\,\deg_{S_2}(\sigma^*)$ is odd if and only if
  $\,\deg_{S_2}(\tau_1^*) + \deg_{S_2}(\tau_2^*)$ is odd. Therefore,
  among the three vertices $\tau_1^*, \tau_2^*, \sigma^*$, an even
  number of them must be of odd degree in $S_2$.  The graph on the
  right in Figure \ref{fig-S2andS3} illustrates this situation. Note
  that the vertices $ahi^*$ and $bhi^*$ have degrees $3$ and $1$ in
  $S_2$, respectively, and the vertex $abhi^*$ has degree $2$ in
  $S_2$.

  \begin{figure}[hb!]
    \centering
    \includegraphics[scale=0.75]{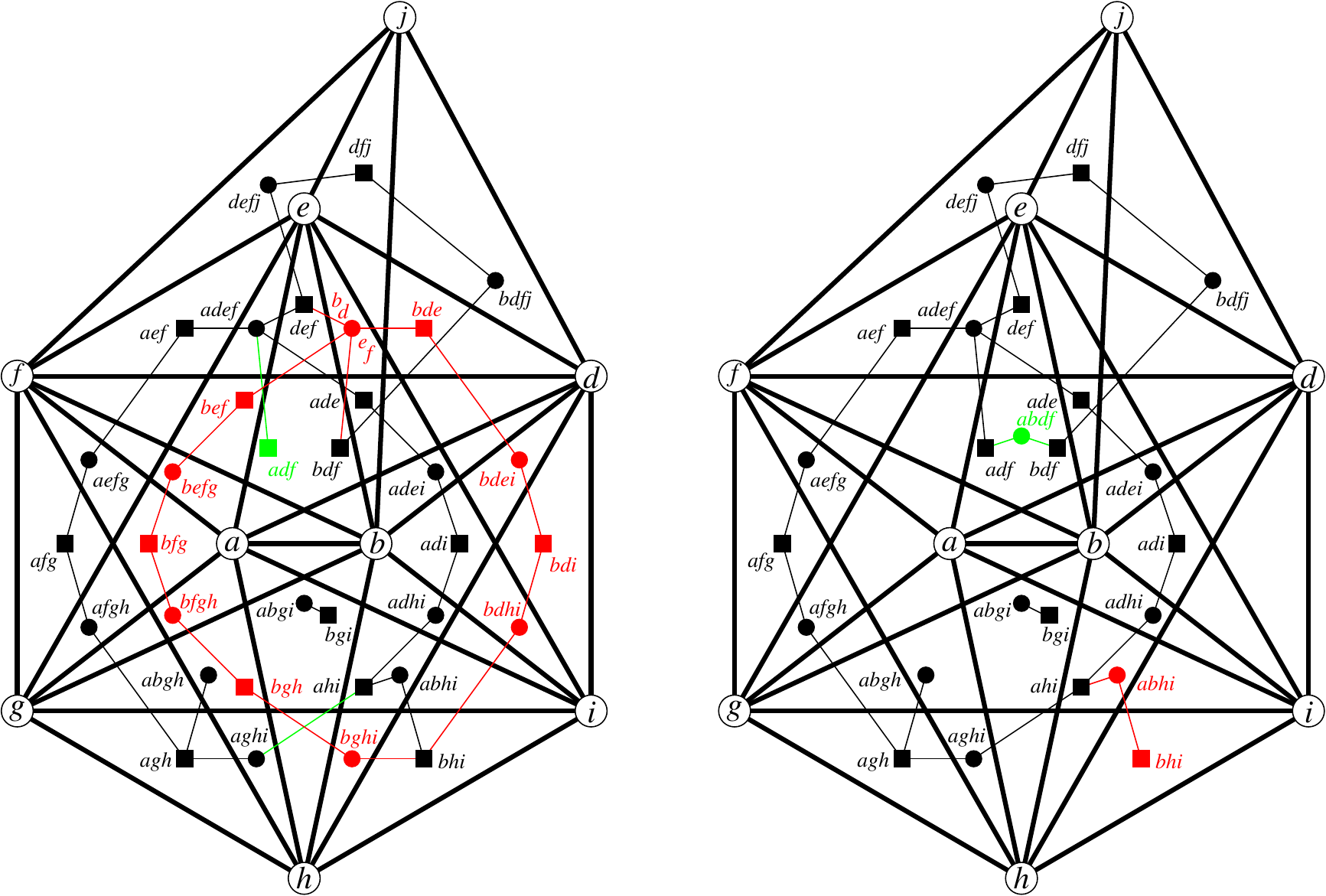}
    \caption{Changes in Step II (left) and in Step III (right).}
    \label{fig-S2andS3}
  \end{figure}
  
  If $\tau_1^*, \tau_2^*, \sigma^*$ are all of even degree, no further
  action is needed.  If this is not the case, but $\sigma^*$ is of
  even degree, then Step \ref{MkCf-negMir} will make $\tau_1^*$ and
  $\tau_2^*$ of even degree, and $\sigma^*$ will still be of even
  degree. Notice that since $ab$ satisfies the $p$-link condition, the
  mirror connections $\tau_1^*\sigma^*$ and $\tau_2^*\sigma^*$ must
  exist in $G$.  If $\sigma^*$ is of odd degree, then Step
  \ref{MkCf-negSnged} will convert $\sigma^*$ and whichever vertex
  between $\tau_1^*$ and $\tau_2^*$ is of odd degree to vertices of
  even degree.

  Steps \ref{MkCf-negMir} and \ref{MkCf-negSnged} ensure that every
  vertex in the subgraph have even degrees. The resulting subgraph
  $C_f$ is a circuit in $\subG_C$ that is not in $\crtM$ because we
  have removed at least one of the edges adjoining vertices that are
  mirror to each other in Step \ref{MkCf-addaSdMir}, and not in
  $\crtL$ because we have removed all collapsing edges in Step
  \ref{MkCf-remcoled}.  Therefore, $C_f \in \Dom(f)\ \cap\ \subG_C$.

  \begin{figure}[ht!]
    \centering
    \includegraphics[scale=0.75]{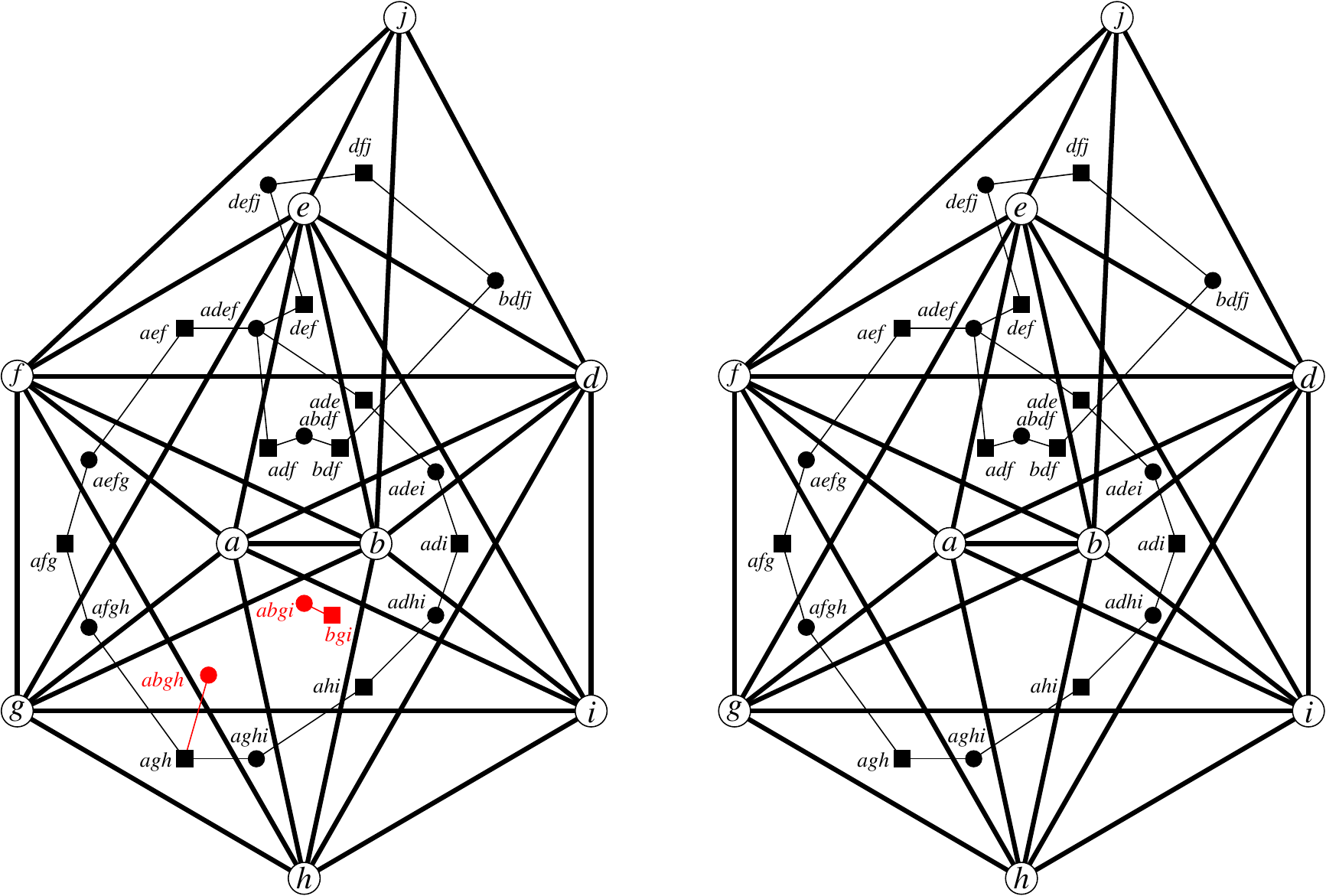}
    \caption{Changes in Step IV, and $C_f$.}
    \label{fig-S4andS5}
  \end{figure}

  \subsubsection{Preservation of $b$-Parity} \label{sssec-bparf}

  To show $f$ preserves $b$-parity (Property
  \ref{thm-circsurjfunc_bpar}), let $C \in \Dom(f)$, $C'=f(C)$, and
  $L=\{\sigma\,|\, \sigma^*\in C\}$.  We need to analyze only those
  vertices $\sigma^*$ of $C$ where $\sigma$ is collapsing, or is a
  mirror.  Since $C \in \Dom(f)$, we may restrict our attention to
  mirror $p$-vertices in $C$, or collapsing $(p+1)$-vertices in $C$.
  In fact, we may further restrict our attention to edges incident to
  $p$-vertices $\tau_1^*$ and $\tau_2^*$ that are mirrors and in $C$,
  because $C \in \Dom(f)$ implies any edges incident to a collapsing
  $(p+1)$-vertex is also incident to a $p$-vertex that is a mirror in
  $C$.

  We will divide the possibilities into four cases based on two
  criteria -- either the mirror connections connecting $\tau_1^*$ and
  $\tau_2^*$ are both in $C$ or both not in $C$, and either the
  orientations of $\tau_1$ and $\tau_2$ are consistent, or they are
  not.  Recall that if $\tau_1$ and $\tau_2$ have a unique common
  $(p-1)$-face $\xi$, $\tau_1$ and $\tau_2$ are consistently oriented
  if they induce opposite orientations on $\xi$.  We assume that
  unless otherwise specified, the orientations of simplices in $K'$
  are chosen so that for each edge $e$ in $C$, the weights of $e$ and
  $g(e)$ are the same.

  If neither edge of the mirror connection is in $C$, $f$ maps all
  edges of $C$ injectively to edges of $C'$. If the orientations of
  $\tau_1$ and $\tau_2$ are not consistent, then we may choose the
  ordering of vertices of $\tau = \gamma_{ab}(\tau_1)$ that determines
  its orientation to be the same as the ordering of the vertices of
  $\tau_1$.  Then each edge $e$ of $C'$ will have the same weight as
  $g^{-1}(e)$ in $C$.  If the orientations of $\tau_1$ and $\tau_2$
  are consistent, then we may still choose the order of vertices of
  $\tau$ in the same way as described above, but now each edge $e$
  incident to $\tau_2^*$ will have the opposite weight as $g(e)$.  The
  sum of the weights of edges incident to $\tau_2^*$ and the sum of
  the weights of their corresponding edges in $C'$ must differ by 2
  for each such edge $e$.  Since $C$ is a circuit, there must be an
  even number of such edges.  Therefore, this difference is $0 \bmod
  4$, and hence the $b$-parity of $C'$ is the same as that of
  $C$. We now discuss the remaining case when both mirror
  connections are in $C$ and the orientations of $\tau_1$ and $\tau_2$
  are not consistent. We illustrate both cases for $p=1$ in
  Figure \ref{fig-bparcases}.

  Consider the case when both mirror connections are in $C$ and the
  orientations of $\tau_1$ and $\tau_2$ are not consistent. Notice
  that the orientation of their common coface $\sigma$ must agree with
  the orientation of one of these mirrors and disagree with the
  orientation of the other mirror.  Hence, the sum of the weights of
  the mirror connections must be $0$. The function $f$ maps all other
  edges of $C$ injectively to edges of $C'$, and we may choose the
  orientation of $\tau$ as before. Therefore, the sum of the weights
  of edges in $C'$ equals the sum of the weights of edges in $C$.

  \begin{figure}[ht!]
    \centering
    \includegraphics[scale=0.8]{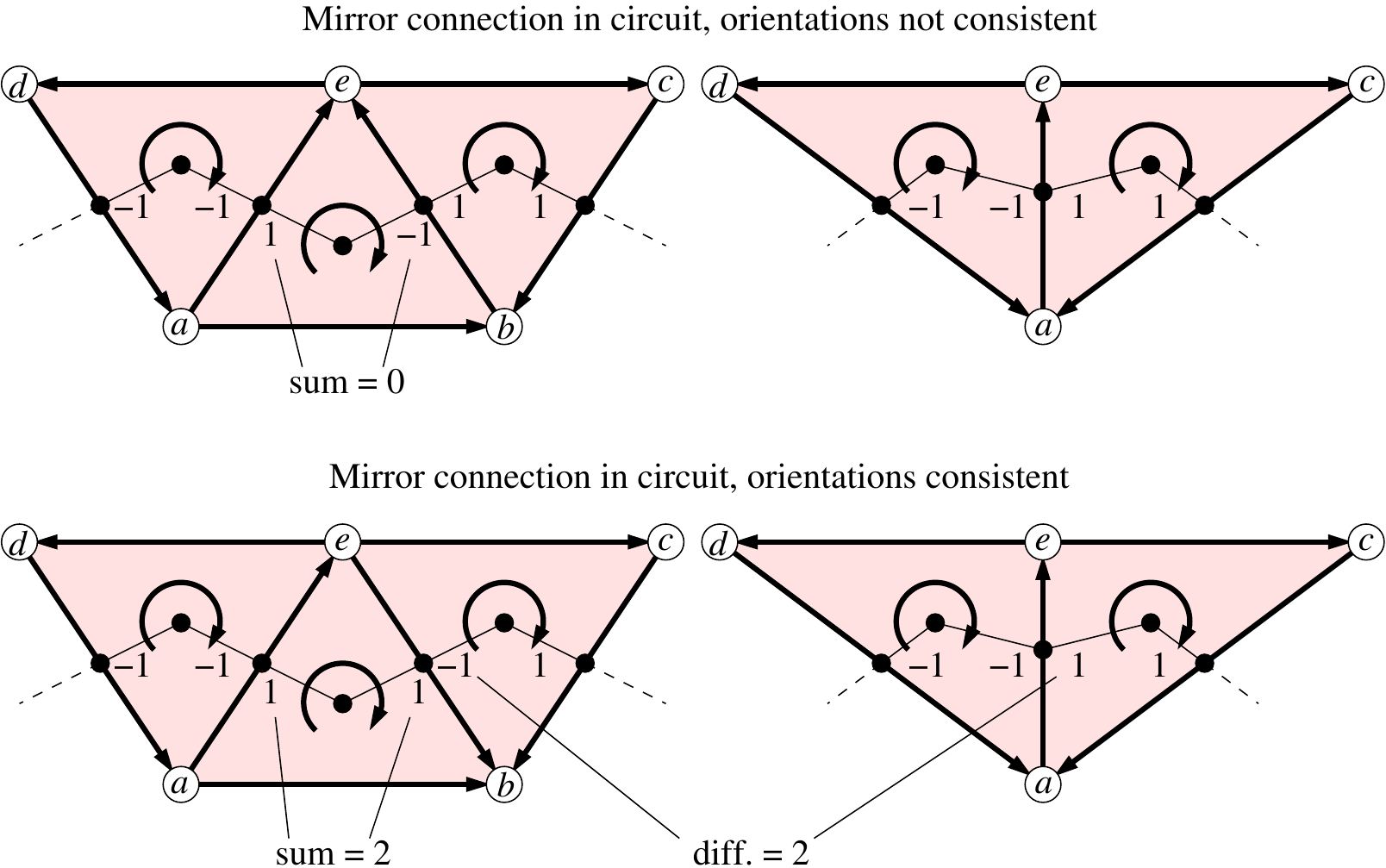}
    \caption{Examples of weights of edges of the 2-graph before (left)
      and after (right) edge contraction $\gamma_{ab}$.}
    \label{fig-bparcases}
  \end{figure}

  If the orientations of $\tau_1$ and $\tau_2$ are consistent, then
  the sum of the weights of the mirror connections must be $2 \bmod
  4$.  However, the number of edges mapped injectively by $g$ that are
  incident to $\tau_2^*$ must be odd.  Hence, the difference between
  the sum of weights of these edges and their corresponding edges in
  $C'$ must also be $2 \bmod 4$. It follows that the overall
  difference between the weights of edges in $C$ and $C'$ is $0 \bmod
  4$, and the $b$-parity of $C'$ equals that of $C$.  Thus, in all
  cases, $f$ preserves $b$-parity.  Figure~\ref{fig-bparcases}
  illustrates simple examples of the cases when the mirror connection
  is in $C$ for a $2$-graph.

  \subsubsection{Preservation of Chords} \label{sssec-chordf}

  To show property~\ref{thm-circsurjfunc_chord}, notice that each edge
  of $G$ is contained in a $(p+1)$-simplex in $K$.  If $C \in \Dom(f)$
  has a chord $h$, and $h$ adjoins a $(p+1)$-vertex $\sigma^*$, then
  there must be at least three $p$-vertices in $C$ -- at least two
  directly connected to $\sigma^*$ by edges in $C$, and also the
  $p$-vertex adjoining $h$.  Since each collapsing $(p+1)$-simplex has
  exactly two $p$-faces that are not collapsing, $\sigma$ cannot be a
  collapsing simplex.  Otherwise, $C \in \crtL$.

  If $L$ represents the set of all simplices in $K$ whose duals are in
  $C$, $\sigma$ cannot have its mirror in $L$ as $C \not\in \crtM$.
  Therefore, $\sigma$ is injective in $L$.
  Figure~\ref{fig-CandCprimechord} illustrates a chord in a 2-graph.
  Since $\gamma_{ab}$ maps $\sigma$ injectively, any edge of $G$ in
  $\sigma$ is in $C$ if and only if its image under $g$ is in $f(C)$.
  Therefore, $g(h)$ is not in $f(C)$, and for the vertex $\sigma^*$ of
  $h$, $g(\sigma^*)$ is in $f(C)$.  To show for the other vertex
  $\tau^*$ of $h$ that $g(\tau^*)$ is in $f(C)$, this could only be
  not true if all edges of $C$ incident to $\tau^*$ were not mapped to
  edges by $g$.  Since $C$ is a circuit, $\tau^*$ is incident to at
  least two edges.  But the only $p$-vertices incident to more than
  one edge in $G$ not mapped to an edge by $g$ are dual to collapsing
  $p$-simplices, which cannot hold for $\tau^*$ here since $C \notin
  \crtL$.
  \begin{figure}[ht!]
    \centering
    \includegraphics[scale=0.7]{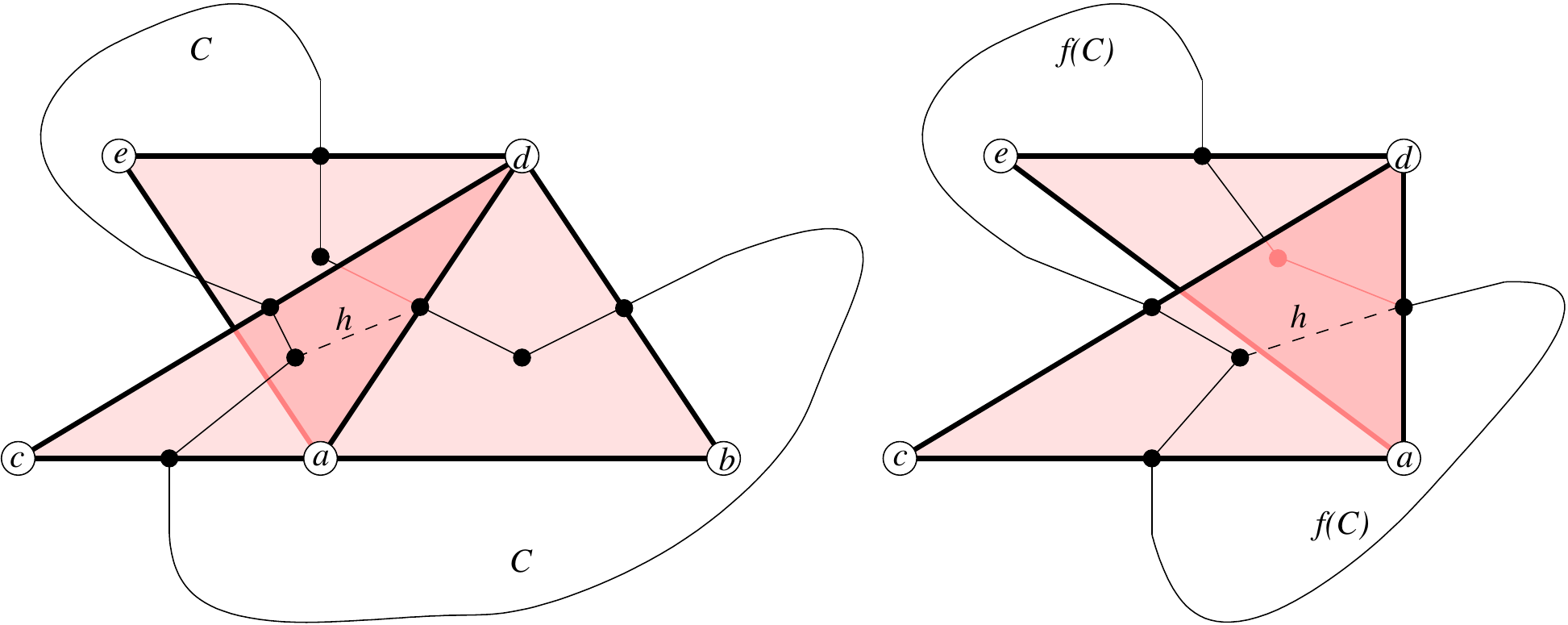}
    \caption{A chord $h$ of cycle $C$, and $C' = f(C)$.}
    \label{fig-CandCprimechord}
  \end{figure}
\end{proof}

\section{Discussion} \label{sec-disc}
We have presented several results that connect the local $p$-link
conditions to preserving various homological classes and torsions
under edge contractions.  We used both graph theoretic and algebraic
topological techniques to arrive at these results.
Our results on homology and relative homology groups may be
used to accelerate the computations of these topological
structures efficiently using edge contractions. More importantly,
we have laid down almost a complete picture of the relationship
of edge contractions in regards to simplicial homology.
It is not hard to see that the conditions in Theorem~\ref{homo-thm}
are not only sufficient, but are necessary for absolute homology
groups preservation. For relative homology, the scenario becomes
more subtle. The injectivity and hence isomorphism cannot be guaranteed
with any link condition in the original complex as our example
in Figure~\ref{fig-contr_ex} shows. The link conditions
in Theorem~\ref{main-topo-thm}
are sufficient for surjectivity; but are they necessary? This remains
open. For relative torsions, Theorem~\ref{thm-circsurjfunc}
and our example of punctured
M\"{o}bius strip in Section~\ref{sec:implication} show that the $p$-link
condition is sufficient and sometimes necessary for preventing new torsions,
and our example in Figure~\ref{fig-contr_ex} shows that no link
condition can ensure preserving existing relative torsions.

An open question is
whether there exist local conditions that also preserve the total
unimodularity property of the complex.  We know that total
unimodularity cannot be destroyed by an edge contraction if the
appropriate link condition holds.  But, we do not know how to preserve
the {\em absence} of total unimodularity with local conditions.  In
the Section~\ref{ssec-exmpl}, we discuss a concrete example where
linear programming fails to provide an optimal homologous chain, but
an edge contraction makes it amenable to such computation by
eliminating relative torsion.  An immediately relevant question is how
to approximate the optimal chain in the original complex using the one
in a reduced complex.

\section*{Acknowledgment} We acknowledge the support of the NSF grant
CCF-1064416. 

\bibliographystyle{plain}
\bibliography{homology}

\end{document}